\newtheorem{proposition}{Proposition}
\newtheorem{lemma}{Lemma}
\newtheorem*{remark}{Remark}
\begin{document}

\begin{frontmatter}

\begin{fmbox}
\dochead{Research}


\title{Performance Analysis of Ultra-Reliable Short Message Decode and Forward Relaying Protocols}


\author[
addressref={aff1},                   
corref={aff1},                       
email={Parisa.Nouri@oulu.fi}   
]{\inits{PN}\fnm{Parisa} \snm{Nouri}}
\author[
addressref={aff1},
email={Hirley.Alves@oulu.fi}
]{\inits{HA}\fnm{Hirley} \snm{Alves}}
\author[
addressref={aff1},
email={Matti.Latva-aho@oulu.fi}
]{\inits{ML}\fnm{Matti} \snm{Latva-aho}}


\address[id=aff1]{
	\orgname{Centre for Wireless Communications (CWC), University of Oulu, Finland} 
}



\end{fmbox}


\begin{abstractbox}
	
	\begin{abstract} 
	Machine-Type Communication (MTC) is a rapidly growing technology which covers a broad range of automated applications and propels the world into a fully connected society. Two new use cases of MTC are mMTC and URLLC, where mMTC support a large number of devices with high reliability and low rate connectivity while URLLC refers to excessively low outage probability under very stringent latency constraint. Herein, we examine the URLLC through three cooperative schemes, namely dual-hop DF, SC and MRC, and compare to direct transmission under Rayleigh fading. We compare the performance of studied cooperative protocols under two distinct power constraints with respect to latency and energy efficiency. Moreover, we illustrate the impact of coding rate on the probability of successful transmission in ultra-reliable region in addition to the effect of power allocation on the outage probability. We also provide the performance analysis of cooperative schemes in terms of energy efficiency and latency requirements. 
	\end{abstract}


\begin{keyword}
\kwd{Machine-type communications}
\kwd{Ultra-reliable low latency communication}
\kwd {Energy efficiency}
\kwd{Cooperative diversity}
\kwd{Outage probability}
\kwd{Rayleigh fading}
\end{keyword}


\end{abstractbox}
%

\end{frontmatter}



\section{Introduction}
The fifth generation (5G) of cellular networks for beyond 2020 envisages to handle two new use cases in Machine-Type Communications (MTC), namely Ultra-Reliable Low Latency Communications (URLLC) and massive MTC (mMTC)~\cite{popovski2017ultra}. In MTC, MTC devices autonomously communicate with minimum human cooperation~\cite{shariatmadari2015machine},~\cite{mehmood2015mobile}. 5G communication technology should be flexible enough to support ultra-reliable low latency communications by guaranteeing reliability greater than $99.999 \%$~\cite{kalor2017network}. Key challenges and requirements of 5G technology such as latency, data rate, energy and cost issues are discussed in more details in~\cite{popovski2017ultra},~\cite{andrews2014will},~\cite{dawy2017toward}.

In recent years, MTC has gained much attention from the mobile network operators, equipment vendors and academic researchers due to such novel communication paradigm, the capability of exchanging short data messages and also being cost-effective, energy efficient\cite{lee2016packet},~\cite{taleb2012machine}; reliable and within a stringent delay requirement. MTC takes advantage of several distinctive properties such as group-based communications, low mobility, time-controlled, time-tolerant and secure connection which are at the same time challenging tasks since technically advanced
solutions are needed to deliver the required tasks. Within the application requirements, hence, opening up different research areas is currently being carried out in academia, industry, and standards bodies~\cite{condoluci2015toward}. Current technologies cover a small range of applications and services while the upcoming MTC should be able to cover a broad range of services with multiple forms of data traffic in order to deal with different service requirements as data rate, latency, reliability, energy consumption and security~\cite{dawy2017toward},~\cite{monserrat2015metis}. Future MTC improvements will be conspicuous in health-care, logistics, process automation, transportation, e.g~\cite{bockelmann2016massive},~\cite{dawy2017toward}. 
In mMTC a huge number of devices in a specific domain are connected to the cellular network with low-rate and low-power connectivity, different quality-of-service (QOS) requirements and high reliability to support demanding situations, e.g. smart meters, actuators~\cite{popovski2014ultra},~\cite{singh2016selective}. 

Moreover, MTC services have to met stringent timing constraints from few seconds to even excessively low end-to-end deadlines in mission critical communications~\cite{biral2016impact}, connection between vehicles, remote control of robots in addition to an extreme low end-to-end latency in the scope of less than a millisecond which is a key enabler in several services including cloud connectivity, industrial control, road safety~\cite{popovski2017ultra},~\cite{bockelmann2016massive},~\cite{durisi2016toward}. 
Latency refers to the time duration between transferring the message from the transmitter and receiving correctly at the receiver where some messages drop due to the buffer overflows, unsuccessful synchronizations, unsuccessful decoding which result in unlimited delay~\cite{popovski2017ultra}. Hence, we can define the reliability as the probability of successful transmission under the predetermined delay constraint~\cite{popovski2017ultra},~\cite{yilmaz2016ultra}. In URLLC, high probability of successful transmission indicates low outage probability (or packet drop) while the opposite does not always hold as the reliability is restricted to a specific latency budget due to the limited amount of channel uses~\cite{popovski2017ultra}. 
Hence, one of the major requirements of URLLC is a extremely low outage probability under a very demanding latency budget where  retransmissions are not always available. In the use of short messages under URLLC, new robust channel codes are needed; otherwise, the performance of the system will be even further away from the Shannon limit with long data packets~\cite{sybis2016channel}. 

Under Shannon's channel coding theorem, error-free communication is attained when the blocklength goes to infinity~\cite{hu2015performance}. For instance, authors in~\cite{polyanskiy2010channel}, provide a tight approximation of achievable coding rate under finite blocklength (FB) regime and indicate a noticeable performance loss compared to the Shannon coding. This motivates us to analyze the performance of MTC under FB regime
since in URLLC, due to the equal packet length of metadata and information bits, an unsuccessful encoding of the metadata decreases the system efficiency~\cite{durisi2016toward}. In the past few years, several works have studied different aspects of FB coding since majority of the theoretical results assume infinite blocklength (IFB). For instance, authors in \cite{iscan2016comparison}, examine some possible FB coding schemes which may be applied in 5G technology. They show that novel coding schemes with better minimum distance between the codewords, improve the efficiency of system at the cost of more sophisticated decoders. Moreover, the performance of spectrum sharing networks with FB codes are studied in~\cite{makki2015finite}. The blocklength of information bits highly affects the system quality where an optimal power allocation technique improves the system efficiency with short message transmissions. Furthermore, authors in~\cite{park2012new}, propose a new power allocation technique, so-called modified water-filling in order to maximize the lower bound of the coding rate with short packet transmission compared to the common water-filling method. In addition, performance of ARQ protocol in terms of throughput and average latency is studied in~\cite{devassy2014finite}. Authors determine the optimal lengths of the codeword which minimize the latency and maximize the throughput per-user for an specific number of information bits. They illustrate that with optimal codes, the shorter the codeword is, the lower outage probability attains. 
\subsection{On the Impact of Cooperative Diversity}
Cooperative diversity provides the possibility of high data rate; while improving the reliability. In cooperative networks, intermediate nodes transfer the message from the source to destination~\cite{ikki2007performance}. Cooperative technique exploits the spatial diversity gain to reduce the impact of wireless fading from multipath propagation. The major advantage of this technique is that the several independent copies of a signal arrive at the destination without installing collocated antennas at the source or receiver in addition of a better signal quality, better coverage, greater capacity and lower transmit power~\cite{zimmermann2005performance},~\cite{mansourkiaie2015cooperative}. 
The most conventional cooperative scheme is decode-and-forward (DF), where the auxiliary node, namely relay, decodes, encodes and retransmits the message~\cite{zimmermann2005performance}. Cooperative schemes are categorized as fixed, adaptive and feedback schemes~\cite{zimmermann2005performance}. In the fixed protocol, relay always forwards the message to destination while in adaptive protocol, the relay retransmits the message under a predefined threshold rule which enables that to communicate independently or not. In the feedback protocol, if the destination requests, the cooperation takes place~\cite{zimmermann2005performance}.
During the past few years, the efficiency of cooperative networks has been investigated in several system and channel models. Authors in~\cite{swamy2015cooperative}, propose a method that meet the high reliability and latency requirements through taking the advantage of cooperative relaying technique. Moreover, authors in~\cite{mansourkiaie2015cooperative}, provide a comprehensive study regarding the exiting cooperative schemes and analyze the performance of each scheme. 
Relaying performance of quasi-static Rayleigh channels where the channel gains of the direct link and relaying are combined at the destination, is studied in~\cite{hu2015performance}. They indicate that the performance loss increases if the outage probability of the source-to-relay link is higher than the overall outage probability. The efficiency of multi-relay DF scenario under the assumption of perfect channel-state-information (CSI) and partial CSI is provided in~\cite{hu2016relaying}. Authors show that with perfect CSI, the throughput of IFB is smaller than the throughput with FB coding. Authors in~\cite{du2016finite}, examine the throughput of a multi-hop relaying network under FB and IFB regimes with two assumptions: $i)$target overall outage probability $ii)$constant coding rate. They illustrate that there is different but optimal number of hops which maximize the throughput for either FB or IFB assumptions. In addition, they indicate that the FB-throughput is quasi-concave in the overall outage probability and coding rate. Furthermore, authors illustrate that the multi-hop network is less affected by the blocklength under the constant coding rate assumption compared to the target overall outage probability scenario. Moreover, \cite{beaulieu2006closed} studies the performance of DF relay network in dissimilar Rayleigh fading channels. Although authors attain the closed form expression of the outage probability, but they do not consider the impact of finite blocklength coding. Furthermore, authors in \cite{tran2014achievable}, study the achievable coding rate and ergodic capacity of non-orthogonal amply-and-forward (AF) multi relay network subject a total average power constraint (TAPC) and an individual average power constraint (IAPC). They indicate that the ergodic capacity can be attained by an iterative water-filling-based algorithm. In addition, they show that in a multi relay NAF network, the transmit power at the source should be equally allocated in all broadcasting phases to cover the capacity at sufficiently high SNRs.

Moreover, in our previous work~\cite{nouri2017performance}, introduces relaying as means to achieve ultra-reliable. We study the performance of cooperative relaying protocols, supposing Rayleigh fading channels. We show that relaying technique improves the reliability and  how we can meet the ultra-reliable communication requirements. We examine the impact of coded blocklength and number of information bits on the probability of successful transmission. In addition, it is shown that relaying requires less transmit power compared to the direct transmission (DT) to enable ultra-reliable under FB regime. We also provide an approximation to the outage probability in closed form. We extend our work in~\cite{nouri2017performance}, by considering ultra-reliable MTC with incremental relaying technique in~\cite{nouri2017ultra}. We define the overall outage probability in each studied relaying scheme, assuming Nakagami-$m$ fading. We investigate the impact of fading severity and power allocation factor on the outage probability. We also provide the outage probability in closed form. Our works in~\cite{nouri2017ultra} and~\cite{nouri2017performance} show that cooperative diversity os useful to meet URLLC requirements.
\subsection{Energy Efficiency of Cooperative Communication}
Another key characteristic of wireless communications that highly affect the performance of 5G networks, is the energy efficiency (EE) due to the limited energy resources in energy-constraint networks~\cite{qiao2009energy},~\cite{gursoy2009capacity}. EE which has been widely studied recently literatures, is defined as the ratio of successfully transmitted bits to the total consumed energy~\cite{wu2015recent},\cite{she2016energy}. Hence, reducing the amount of energy-per-bit, improves EE at low SNR regime~\cite{gursoy2009capacity}, particularly in wireless networks where the batteries which are not rechargeable or easy to charge, supply the wireless components~\cite{wu2015recent}. The reason which motivates us to study EE in the context of URLLC is that,
URLLC is achieved at the cost of high transmit power~\cite{dosti2017ultra},~\cite{dosti2017ultraa}, but we aim to show that cooperative diversity alleviates these demands.  

In early works, authors in~\cite{she2016energy},  examine EE in tactile Internet under queuing and transmission delays to design an energy efficient resource allocation strategy. They propose an optimal resource allocation strategy where the average total consumed power under stringent latency constraint equals to that with unlimited queuing latency requirement with plenty of transmit antennas. 
Moreover,~\cite{wu2015recent}, provides a comprehensive overview of energy-efficient networks, and determines the trade-off between energy efficiency and spectrum efficiency and their applications in 5G networks.

\subsection{Our Contribution}
In this work, we further study three cooperative protocols, namely DF, selection combining (SC) and MRC. Furthermore, we indicate the superiority of MRC over SC and DF protocols in terms of coding rate and reliability. We also show the optimal value of power allocation at the source and relay in each of studied protocols. Moreover, we examine the minimum latency and energy efficiency in cooperative schemes under two different power allocation constraints. 
\begin{center}
	\begin{figure}[!t] 
		\centering
		\includegraphics[width=0.57\columnwidth,height=2.in]{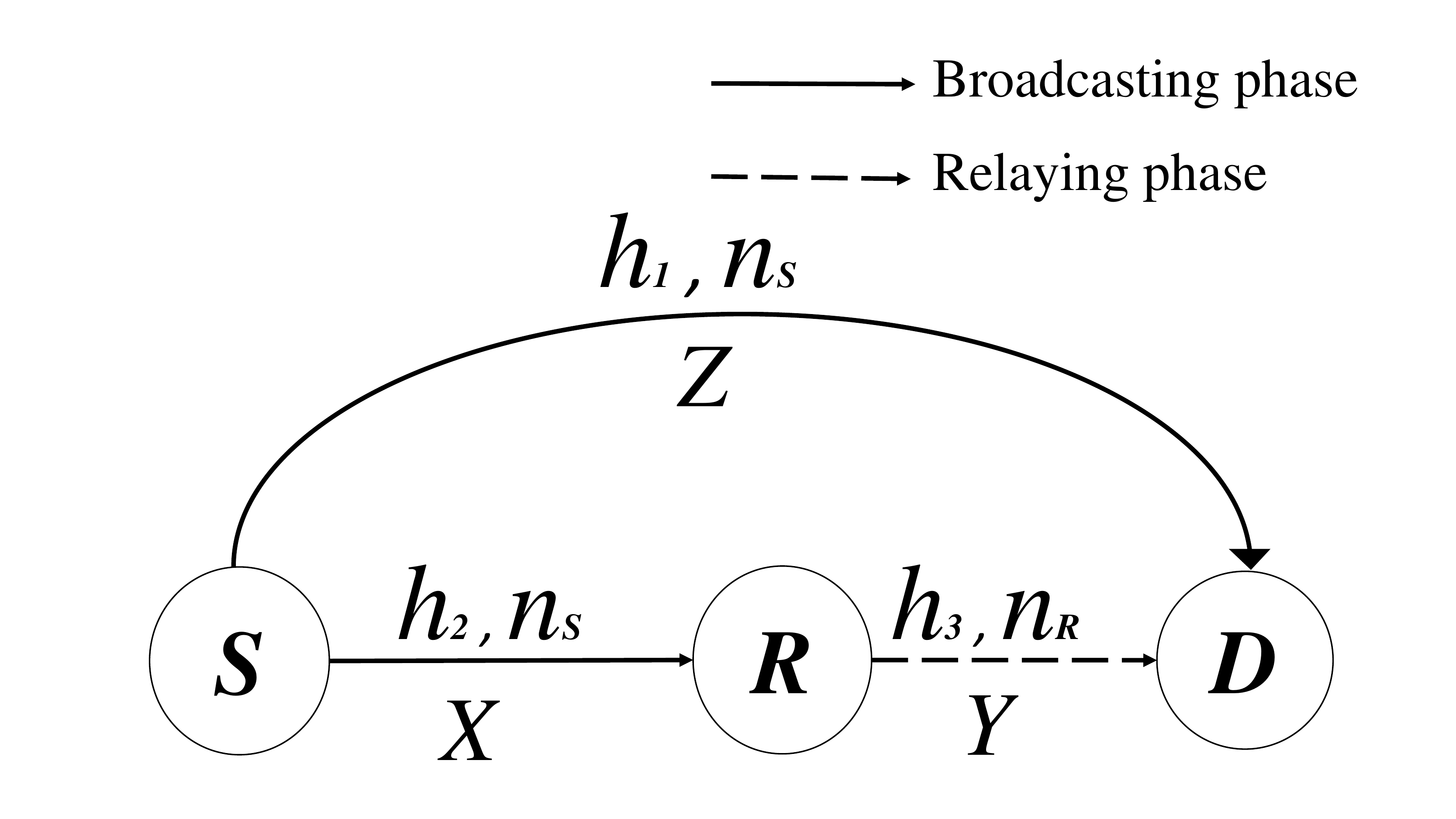}
		\vspace{-0mm}
		\caption{System model for relaying scenario with a source, destination and a decode-forward relay. This figure illustrates the system model for cooperative decode-and-forward relaying scenario including a source, a relay, and a destination. The links between $S$ to $D$, $S$ to $R$, and $R$ to $D$ are referred as the direct link $h_1$, broadcasting link $h_2$ and relaying link $h_3$ each of which with $n_i$ channel uses respectively, where $i\in \{S,R\}$.}
		\label{fig:systemmodel}
		\vspace{-0mm}
	\end{figure}
\end{center}
The following are considered the contributions of this work.
\begin{itemize}
	\item We provide the general expression of the outage probability for each relaying scheme studied in this work.
	\item We extend the work in~\cite{hu2015performance}, by proposing the closed form expression for the outage probability.
	\item We extend our previous works in~\cite{nouri2017performance},~\cite{nouri2017ultra} by studying the minimum latency and energy efficiency under two distinct power constraints, so-called $i$) equal power allocation (EPA) and $ii$) optimal power allocation (OPA) strategies which are allocated numerically. 
	\item We provide the asymptotic analysis of studied cooperative schemes.
\end{itemize}
%

The rest of this paper is organized as follows. Section \ref{systemmodel} presents the system model. Section \ref{sec:Relaying assumptions} discusses the cooperative diversity and examines the outage probability in three cooperative schemes considered in this work, and Section \ref{URLLC} presents some numerical results regarding the performance of studied cooperative schemes under URR. Section \ref{EE} investigates the energy efficiency of considered cooperative schemes and presents some numerical results. Finally, Section \ref{con} concludes the paper.
The important abbreviation and symbols are listed in Table 1.
\begin{table}[h!]\label{T11}
	\caption{SUMMARY OF THE FUNCTIONS AND SYMBOLS.}
	\begin{tabular}{cccc}
		\hline
		bpcu & Bit per Channel Use\\
		CSI & Channel State Information\\
		DF & Decode and Forward\\
		DT & Direct Transmission\\
		EE & Energy Efficiency\\
		EPA & Equal Power Allocation\\
		FB & Finite Blocklength\\
		IFB & Infinite Blocklength\\
		mMTC & Massive Machine-Type Communication\\
		MRC & Maximum Ratio Combining\\
		MTC & Machine-Type Communication\\
		OPA & Optimal Power Allocation\\
		PDF & Probability Density Function\\
		QOS & Quality of Service\\
		RV & Random Variable\\
		SC & Selection Combining \\
		SH & Single Hop\\
		SNR & Signal to Noise Ration\\
		URLLC & Ultra-Reliable Low Latency Communication\\
		URR & Ultra reliable Region \vspace{7mm}\\
		
		$f_W(.)$ & Probability Density Function\\
		$\operatorname{E}(.)$ & Expectation\\
		$\operatorname{Q}(.)$ & $\operatorname{Q}$-Function\\
		$C(\rho)$ & Shannon Capacity \\
		$V(\rho)$ &Channel Dispersion\\
		$\operatorname{Q^{-1}}(.)$ & Inverse of $\operatorname{Q}$-Function \vspace{7mm}\\

		$e$ & Exponential Euler's Number\\
		$y$ & Received Signal\\
		$h$ & Fading Channel\\
		$w$ & AWGN Noise\\
		$k$ & Information Bits\\
		$x_i$ & Transmitted Signal\\
		$n$ & Number of Channel Uses\\
		$\log_2$ & Logarithm to the Base 2\\
	
		$n_S$ & Number of Channel Uses for Source\\
		$n_R$ & Number of Channel Uses for Relay\\
		$d_{SD}$ & Distance of Source-Destination Link\\
		$d_{SR}$ & Distance of Source-Relay Link\\		
		$d_{RD}$ & Distance of Relay-Destination Link\\
		$E$& Energy Efficiency\\
		$P$ & Total Power\\
		$P_R$ & Power of Relay \\
		$P_S$ & Power of Source\\
		$P_{TX}$ & Power of Transmitter \\
		$P_{RX}$ & Power of Receiver\\
		$P_{PA}$ & Power of Amplifier\\
		$P_{succ}$ & Probability of Successful Transmission\\
		$P_{max}$& Maximum Total Power\\
		$N_0$ & Power of Noise\vspace{7mm}\\
		
		$\Omega$ & Instantaneous SNR\\
		$\gamma$ & Average SNR \\
		$\rho$ & Average Power Constraint\\
		$\eta$ & Power Allocation Factor\\
		$\cal R$ & Maximum Coding Rate\\
		$\phi$ & Drain Efficiency\\
		$\epsilon$ & Outage Probability \\\hline

	\end{tabular}
\end{table}

\section{Preliminaries}\label{systemmodel}
\vspace{-0mm}
\subsection{System Model}
\vspace{-0mm}

Fig.$1$ illustrates a DF relaying scenario including a source $S$, destination $D$ and a decode-forward relay $R$. We normalize the distance of $S$ to $D$ as $d_{SD}=1$m, and that $R$ can move in a straight line between $S$ and $D$, while the distance between $S$ and $R$ is denoted by $d_{SR}= \beta d_{SD}$ and the distance of the relaying link is denoted by $d_{RD}= (1-\beta) d_{SD}$. The links denoted by the following random variables $X$, $Y$ and $Z$ represent the $S$-$R$, $R$-$D$ and $S$-$D$ links respectively, and each transmission uses $n_i$ channel uses where $i\in \{S,R\}$. This means that $n_S$ channel uses in the broadcasting phase and $n_R$ channel uses for the relaying phase. In this scenario, first $S$ sends the message to the $D$ and $R$ in the broadcasting phase and  if $R$ successfully decodes the message, forwards it to the $D$ in the relaying phase~\cite{hu2015performance}. The received signals in the broadcasting phase are denoted as $y_1$ and $y_2$, and only if $R$ collaborates with $S$, the received signal at $D$ is $y_3$ as follow~\cite{hu2015performance}
\begin{equation}
y_1 = h_1x+w_1, \;\;\;y_2 = h_2x+w_2 \;\;\;\textrm{and} \;\;\; y_3 = h_3x+w_3,
\end{equation}
where $x$ is the transmitted signal with power $P$ and $w_i$ is the AWGN noise with power $N_0=1$ where $i\in \{X,Y,Z\}$.
Quasi-static Rayleigh fading channels in the $S$-$D$, $S$-$R$ and $R$-$D$ links are denoted as $h_1$, $h_2$ and $h_3$, respectively. In this work, we consider two distinct power constraints, namely $i$) EPA where equal powers are allocated to $S$ and $R$ and $ii$) OPA where total power of $S$ and $R$ is equal to the maximum power.
In a DF-based relaying protocol, the instantaneous SNR depends on the total power constraint $P=P_S+P_R=\eta P+ (1-\eta)P$, which is given by $\Omega_Z = \eta P|h_1|^2 / N_0$, $\Omega_X = \eta P|h_2|^2 / N_0$ and $\Omega_Y = (1-\eta) P|h_3|^2 / N_0$, where $0<\eta\leqslant1$ is the power allocation factor considered to provide a fair comparison between DT and cooperative transmissions and $\eta =0.5$ with EPA. Hence, the average SNR in each link is $\gamma_Z = \eta P/ N_0$, $\gamma_X = \eta P/ N_0$ and $\gamma_Y = (1-\eta) P/ N_0$.

\subsection{Performance Analysis of Single-Hop Communication under the Finite Blocklength Regime}
In this section we revisit the concept of FB coding.  In a single hope communication, first $k$ information bits are mapped to a sequence, namely codeword including $n$ symbols. Afterwards, the created codeword passes the wireless channels and channel outputs map into the estimate of the information bits. Thus, for a single-hop communication with blocklength $n$, outage probability $\epsilon$ and the average power constraint $\rho$, where $\frac{1}{n}\sum_{i}^{n}|x_{i}|^2\leq\rho$ holds, the maximum coding rate  ${\cal R}^*(n,\epsilon)$ of AWGN channel in bits is calculated as
\begin{equation} \label{eq:maximum rate}
{\cal R}^*(n,\epsilon)\!=\! C(\rho) - \sqrt{\frac {V(\rho)}{n}}\operatorname{Q}^{-1}(\epsilon)\log_{2}\operatorname{e},
\end{equation} 
where, $C(\rho)\!=\!\log_{2}(1+\rho)$ is the positive channel capacity and $V(\rho)\!=\!\rho (2+\rho) \big/(1+\rho)^2$ is the channel dispersion~\cite{durisi2016toward}.
According to (\ref{eq:maximum rate}), the outage probability is given by 
\begin{align} \label{eq:outage probability}
\epsilon \!=\! \operatorname{Q}\Bigg(\sqrt{n} \frac{C(\rho)-{\cal R}^*(n,\epsilon)}{\sqrt{V(\rho)}\log_{2}\operatorname{e}}\Bigg),
\end{align}
which holds for the AWGN channels where the channel coefficient $h_{i}$ is equal to one. While for quasi-static fading channels, we attain the outage probability as follow~\cite{durisi2016toward}
\begin{equation} \label{outage_fading}
\epsilon\approx \operatorname{E}\Bigg[\operatorname{Q}\Bigg(\sqrt{n}\frac{C(\rho|h|^2)-{\cal R}^{*}(n,\epsilon) }{\sqrt{V(\rho|h|^2)}}\Bigg)\Bigg].
\end{equation}
Note that (\ref{outage_fading}) is accurate for $n > 100$, as proved for AWGN channels \cite[Figs. $12$ and $13$]{polyanskiy2010channel}, as well as for fading channels as discussed in~\cite{yang2014quasi}.
In addition, in the relaying schemes, we assume that $S$ can encode $k$ information bits into $n_{S}$ channel uses, while $R$ uses $n_{R}$ channel uses. Hence, $S$ and $R$ could employ more sophisticated encoding technique than~\cite{hu2015performance},~\cite{hu2016blocklength}.
\subsection{Closed-Form Expression of the Outage Probability}
Unfortunately, (\ref{outage_fading}) does not have a closed-form expression, but it can be tightly approximated as we shall see next. 
\begin{lemma} 
	The outage probability is approximated as
	\begin{align}\label{outage_fading_linear}
	\epsilon\!=\!1\!-\!\dfrac{\zeta}{\sqrt{2\pi}}\exp(-\theta_m)\bigg[\exp\bigg(\sqrt{\dfrac{\pi}{2\zeta^2}}\bigg)\!-\!\exp\bigg(-\sqrt{\dfrac{\pi}{2\zeta^2}}\bigg)\bigg],
	\end{align}
	where $\theta_m\!=\!\tfrac{2^{\cal R}-1}{P}$ and $\zeta= P\sqrt{2\pi}\mu$, where $\mu =\sqrt{\tfrac{n}{2\pi}}(\operatorname{e}^{2\cal R}-1)^{-\frac{1}{2}}$.
\end{lemma}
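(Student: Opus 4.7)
The plan is to reduce (\ref{outage_fading}) to a tractable integral via a linearization of the integrand around its "inflection" and then evaluate it in closed form against the exponential SNR density of Rayleigh fading. Concretely, set $\gamma = P|h|^2$ and define $g(\gamma) = Q\!\bigl(\sqrt{n}\,(C(\gamma)-\mathcal{R})/\sqrt{V(\gamma)}\bigr)$. Since $|h|^2 \sim \mathrm{Exp}(1)$, the density of $\gamma$ is $f_\gamma(\gamma) = \tfrac{1}{P}e^{-\gamma/P}$, and the claim is
\[
\epsilon \;\approx\; \int_0^\infty g(\gamma)\, f_\gamma(\gamma)\, d\gamma.
\]
The function $g(\gamma)$ is a monotonically decreasing S-shape that equals $\tfrac{1}{2}$ exactly at $\gamma=\theta$, where $\theta$ solves $C(\theta)=\mathcal{R}$, i.e.\ $\theta = e^{\mathcal{R}}-1$ in nats. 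I would replace $g$ by the three-piece linearization
\[
\tilde g(\gamma) \;=\;
\begin{cases}
1, & \gamma \le \theta - \tfrac{1}{2\mu},\\[2pt]
\tfrac{1}{2} - \mu(\gamma-\theta), & \theta - \tfrac{1}{2\mu} < \gamma < \theta + \tfrac{1}{2\mu},\\[2pt]
0, & \gamma \ge \theta + \tfrac{1}{2\mu},
\end{cases}
\]
where $\mu = |g'(\theta)|$. Computing this derivative is the only nontrivial analytical step: at $\gamma = \theta$ the Q-argument vanishes so only the Gaussian density $\phi(0)=1/\sqrt{2\pi}$ survives, and $V(\theta) = 1-e^{-2\mathcal{R}}$ with $C'(\theta)=e^{-\mathcal{R}}$, giving exactly $\mu = \sqrt{n/(2\pi)}\,(e^{2\mathcal{R}}-1)^{-1/2}$ as in the statement.

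Next I would split the integral into the three corresponding pieces. The outer pieces are immediate exponentials, and the middle piece requires only the standard primitive $\int \gamma\, e^{-\gamma/P}\,d\gamma = -e^{-\gamma/P}(\gamma+P)$. Collecting terms at the endpoints $\nu=\theta-\tfrac{1}{2\mu}$ and $\delta=\theta+\tfrac{1}{2\mu}$ and using $\mu(\delta-\theta) = \mu(\theta-\nu) = \tfrac{1}{2}$, the coefficients in front of $e^{-\nu/P}$ collapse to $-\mu P$ and those in front of $e^{-\delta/P}$ collapse to $+\mu P$. What remains is the compact identity
\[
\epsilon \;\approx\; 1 \;-\; \mu P\, e^{-\theta/P}\!\left(e^{1/(2P\mu)} - e^{-1/(2P\mu)}\right).
\]

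Finally I would translate this to the notation of the lemma: with $\zeta = P\sqrt{2\pi}\,\mu$ one has $\mu P = \zeta/\sqrt{2\pi}$ and $1/(2P\mu) = \sqrt{\pi/(2\zeta^2)}$, and with $\theta_m = \theta/P$ the expression matches (\ref{outage_fading_linear}) exactly. The main obstacle I anticipate is justifying the linearization itself: one has to argue that the tangent at $\gamma=\theta$ with cutoffs at $\tilde g=0$ and $\tilde g=1$ provides a tight approximation to the true S-shape, a point that standard references on the Polyanskiy–Poor–Verdú bound establish empirically and which also controls the accuracy already inherent in (\ref{outage_fading}). Verifying the cancellation of the algebra at $\nu$ and $\delta$ is routine once the slope $\mu$ is pinned down correctly.
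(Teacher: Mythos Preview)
Your approach is essentially the same as the paper's: replace $Q(g(\gamma))$ by a three-piece tangent linearization at the inflection point $\gamma=\theta$ (the Makki--Svensson--Zorzi device the paper cites) and integrate against the exponential SNR density of Rayleigh fading. Your version is arguably cleaner, since you identify $\mu$ directly as the slope $|g'(\theta)|$ and take cutoffs at $\theta\pm 1/(2\mu)$; the paper's displayed linearization~(\ref{eq:W(t)}) writes the middle piece with slope $\mu/\sqrt{2\pi}$ and half-width $\sqrt{\pi/2}/\mu$, which is a notational slip relative to the $\mu$ defined in the lemma---your parametrization is the one that actually reproduces~(\ref{outage_fading_linear}) after the substitution $\zeta=P\sqrt{2\pi}\mu$.
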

\begin{proof}
	Let us first define  $g(x)= \sqrt{n}\tfrac{C(\rho)-\cal R}{\sqrt{V(\rho)}}$, then we resort to a linearization of the $\operatorname{Q}$-function~\cite{makki2015finite},~\cite{makki2014finite}
	\begin{align}\label{eq:W(t)}
	K(t) \approx \operatorname{Q}(g(x))\!=\!
	\left\{
	\begin{array}{@{}ll@{}}
	\ 1 & t\leqslant\varrho\\
	\ \dfrac{1}{2}-\dfrac{\mu}{\sqrt{2\pi}}(x-\theta) & \varrho<t< \vartheta\\
	\ 0 & t\geq \vartheta
	\end{array}\right.
	\end{align}
	where $\theta\!=\!2^{\cal R}-1$, $\vartheta\!=\!\theta\!+\!\sqrt{\tfrac{\pi}{2}\mu^{-2}}$, $\varrho=\theta\!-\!\sqrt{\tfrac{\pi}{2}\mu^{-2}}$.
	Then, we calculate $\operatorname{E_{X}}[\operatorname{Q}\left(g(x)\right)]\!=\! \int_{0}^{\infty}K(t)f_{X}(x)dx\nonumber$, where $f_{X}(x)$ is the probability density function (PDF) of the SNR of the link $X$, and the solution is given in (\ref{outage_fading_linear}). 
\end{proof}
\vspace{-4mm}
\begin{center}
	\begin{figure}[!t] 
		\centering
		\includegraphics[width=0.8\columnwidth,height=2.2in]{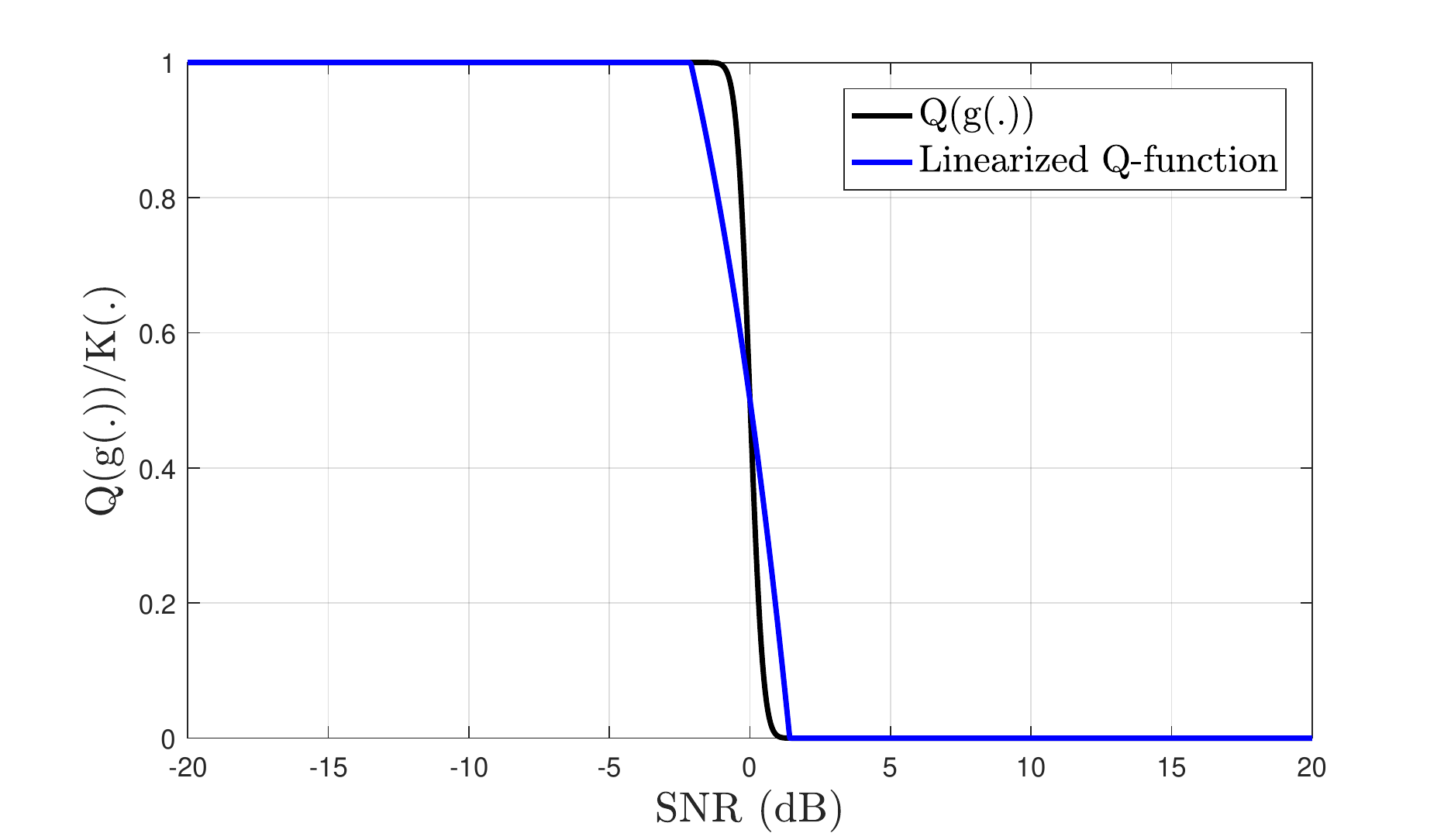}
		\vspace{-0mm}
		\caption{Accuracy of linearized Q-function compared to the original Q-function with $\cal R$$=1$ (bpcu). This figure illustrates the accuracy of Q-function linearization applied in closed form expression of the outage probability.}
		\label{fig:lemma1}
		\vspace{-0mm}
	\end{figure}
\end{center}
\begin{remark}
	 Moreover, we compare the accuracy of linearized Q-function in (\ref{eq:W(t)}) to original Q-function in (\ref{eq:outage probability}) as indicated in Fig.$2$. The difference between these two plots does not have a noticeable impact on the outage probability since we find the approximated outage probability in (\ref{outage_fading_linear}) via integrating over the SNR range and due to the symmetric property of the function as evinced by Fig.$2$, regions that show the difference between the original and linearized Q-function, cancel each other and so, this difference becomes negligible as illustrated in Fig.$3$. Thus, we can notice that error defined by $\text{error}=|\frac{\epsilon-\epsilon_{\text{app}}}{\epsilon}|$ is approximately equal to zero which shows the accuracy of the linearized Q-function applied in the closed form expression of the outage probability. Fore example, the maximum error over the entire SNR range is about $0.03 \%$. Similar conclusion holds for other
	values of $\cal R$.
\end{remark}

\begin{center}
	\begin{figure}[!htp] 
		\centering
		\includegraphics[width=0.8\columnwidth,height=2.2in]{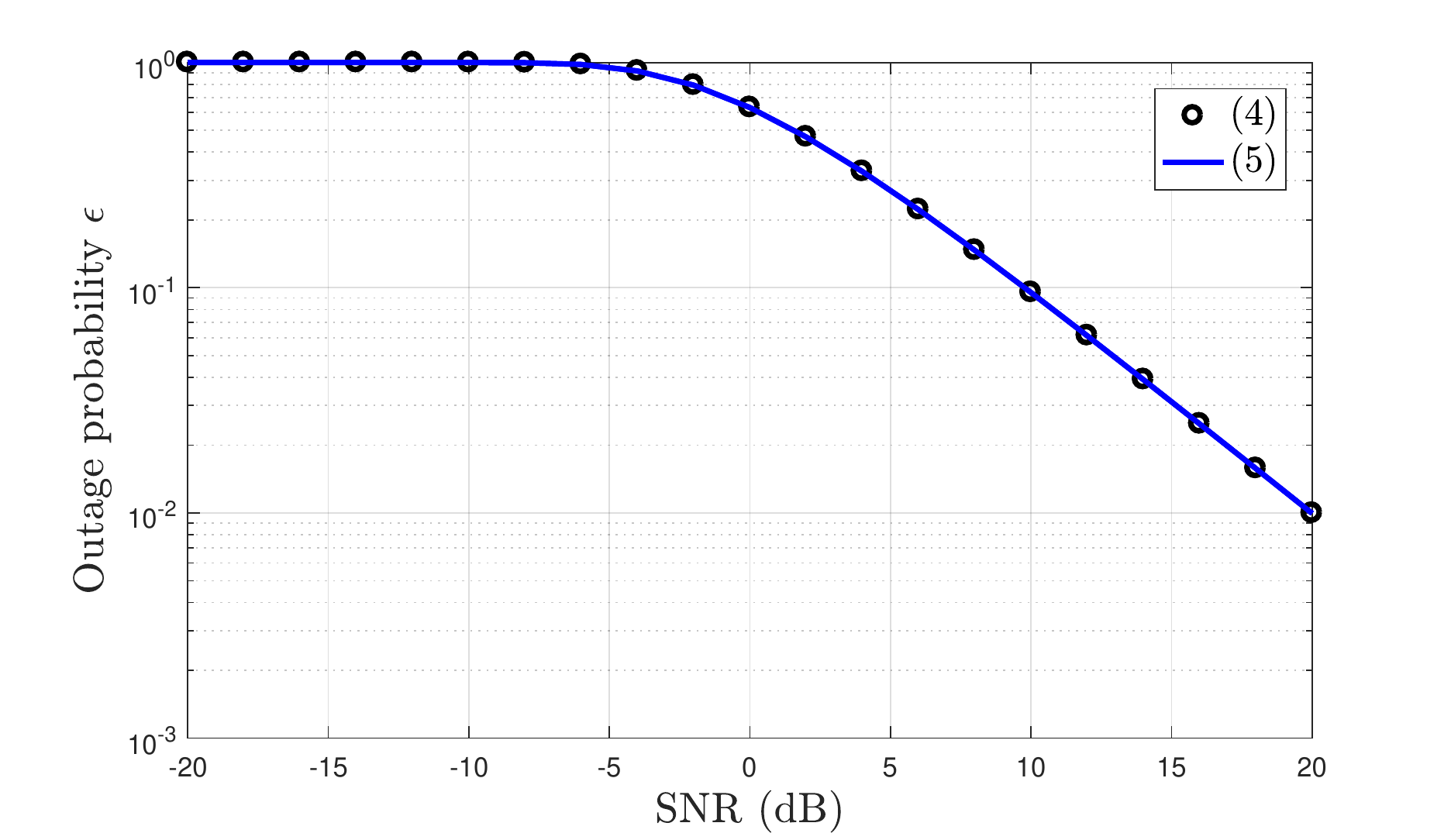}
		\vspace{-0mm}
		\caption{Comparing the accuracy of approximated outage probability in (4) and (5) with $\cal R$$=0.2$ (bpcu). This figure compares the accuracy of outage probability with the numerical integral of Q-function in (4) to the approximated outage probability in (5).}
		\label{fig:lemma11}
		\vspace{-0mm}
	\end{figure}
\end{center}

\section{The  Proposed  Method} \label{sec:Relaying assumptions}
In this section, we investigate the outage probability of cooperative DF, SC and MRC protocols under the FB regime. The direct transmission model is used here as the basis of the comparison. 
\subsection{Direct Transmission}
The source sends the message directly to the destination, where $\Omega_{Z'}\!=\!Z/\eta=P|h_{SD}|^2/N_0$, with average SNR $\gamma_{Z'}\!=\!P/N_0$, where the outage probability is calculated as in (\ref{outage_fading_linear}) but $\zeta$ with $P_{S}\!=\!P$ and $\mu$ with $n\!=\!n_{S}$. 

\subsection{Dual Hop Decode-and-Forward (DF)}
In this scheme, since the $S$-$D$ distance is too large, it assumes that the direct link is in the outage; thus, $R$ always collaborates with the source. Hence, $S$ sends the message to both $R$ and $D$ in the broadcasting phase. Then, $R$ transfers the message to $D$~\cite{laneman2004cooperative}. The overall outage probability is given by
\begin{equation} \label{DF outage}
\epsilon_{DF}\!=\!\epsilon_{SR}+(1-\epsilon_{SR})\epsilon_{RD},	
\end{equation}
where $\epsilon_{SR}$ and $\epsilon_{RD}$ are calculated according to (\ref{outage_fading_linear}). Notice that we update $\zeta$ with $P_{S}\!=\!\eta P$, $P_{R}= (1\!-\!\eta) P$ and $\mu$ with $n\!=\!n_{S}$, $n\!=\!n_{R}$, respectively. This scenario can be analyzed as selection combining (SC) or maximum ratio combining (MRC) depending on how the destination combines the original transmitted signal and the retransmitted signal.

%
\subsection{Selection Combining (SC)}
 In this protocol, $R$ starts to collaborate with $S$ if the destination confirms that the source transmission was unsuccessful and so, the destination requests for retransmission from $R$ to receive the frame correctly. Cooperation occurs if $R$ decodes the received message from $S$ correctly and so, transfers the message to $D$. Thereafter, if $D$ confirms that the transmission from $R$ is also failed, $D$ requests for the next subsequent message from $S$. Thus, the outage probability happens only if both $S$-$D$ and $R$-$D$ links are in outage~\cite{alves2012throughput},~\cite{alves2011performance}. The overall outage is given by
\begin{equation}\label{eq:SDF_outage}
\epsilon_{SC}\!=\!\epsilon_{SD}\epsilon_{SR}+(1-\epsilon_{SR})\epsilon_{SD}\epsilon_{RD},
\end{equation} 
where $\epsilon_{SD}$ is equal to (\ref{outage_fading_linear}) where $\zeta$ is updated with $P_{S}\!=\!\eta P$ and $\mu$ with $n\!=\!n_{S}$.

%
\subsection{Maximum Ratio Combining (MRC)}\label{sec:MRC}
In this scenario, relay always collaborates with the source and so, the channel gains of $S$-$D$ and $R$-$D$ links are combined at the receiver. Thus, the aggregated SNR is bigger than the primary attempted transmission rate as the $S$-$D$ transmission failed. In addition, the outage probability occurs if $S$-$D$ or $R$-$D$ transmission fails. Hence, the instantaneous SNR is $\Omega_{W}\!=\! \Omega_{Z}\!+\!\Omega_{Y}$~\cite{alves2012throughput},~\cite{alves2011performance}.
The  outage probability is~\cite{alves2012throughput}
\begin{equation}\label{eq:MRC_outage}
\epsilon_{MRC} \!=\!\epsilon_{SD}\left(\epsilon_{SR}\!+\!\left(1\!-\!\epsilon_{SR}\right)\frac{\epsilon_{SRD}}{\epsilon_{SD}}\right),
\end{equation}
where $\epsilon_{SRD}$ is the outage probability of the source-to-relay-to destination link, notice that the term $\tfrac{\epsilon_{SRD}}{\epsilon_{DF}}$ refers to the probability that D was not able to decode S message alone.
In order to calculate the (\ref{eq:MRC_outage}), first we need to attain the PDF of $W$, and then we calculate the outage probability as proposed in proposition~\ref{propose1}. To do so, let $W$ denote the sum of two independently distributed exponential random variables (RV), $Z$ and $Y$. Then, $f_W (w)$ is~\cite{alves2012throughput} 	
	\begin{equation}\label{eq:MRC_PDF}
	f_W{(w)} \!=\!
	\left\{
	\begin{array}{@{}ll@{}}
	\ \dfrac{w}{\Omega_{Z}^2} \operatorname{exp}\left(-\dfrac{w}{\Omega_{Z}}\right)&\Omega_{Z}\!=\!\Omega_{Y}\\
	\ \dfrac{\operatorname{exp}\left( -\dfrac{w}{\Omega_{Z}}\right)\!-\!\operatorname{exp}\left(- \dfrac{w}{\Omega_{Y}}\right) }{\Omega_{Z}\!-\!\Omega_{Y}} &\Omega_{Z}\!\neq\! \Omega_{Y}\\
	\end{array}\right.
	\end{equation}
		Since the RVs are independent, the proof is straightforward solution of 
		$f_W(w)\!=\! \int_0^\infty  f_Z(w-y) f_y(y) \mathsf{d} y$~\cite{athanasios2017probability}. 
	\begin{proposition}\label{propose1}
		The outage probability of the MRC of the $S$-$D$ and $R$-$D$ links $\epsilon_{SRD}$, is equal to 
		%
		\begin{equation}\label{eq:Outage_MRC_overall}
		\epsilon_{SRD}\!=\! 
		\left\{
		\begin{array}{@{}ll@{}}
		\dfrac{2-\operatorname{e}^{-\varphi}\!+\! \operatorname{e}^{-\alpha}}{2}\!-\!\dfrac{\varrho \operatorname{e}^{-\varphi}}{\Omega_{Z}} \!+\!\dfrac{\varrho \operatorname{e}^{-\varphi}-\vartheta \operatorname{e}^{-\alpha}}{2\Omega_{Z}}\!+\! \dfrac{\mu\theta \varDelta +2\mu\xi}{\sqrt{2\pi}}\!+\!\dfrac{(\mu\tau)/\Omega_{Z}}{\sqrt{2\pi}}&\Omega_{Z}\!=\!\Omega_{Y}\\
		\\
		\
		\dfrac{\Omega_{Z}\!-\!\Omega_{Y}\!+\!\Omega_{Z}\operatorname{e}^{-\alpha
			}\lambda_{1}\!+\!\Omega_{Z}\operatorname{e}^{-\varphi}\lambda_{2}\!+\!\Omega_{Y}\operatorname{e}^{-\tfrac{\vartheta}{\Omega_{Y}}}\lambda_{3}\!+\!\Omega_{Y}\operatorname{e}^{-\tfrac{\varrho}{\Omega_{Y}}}\lambda_{4}}{\Omega_{Z}\!-\!\Omega_{Y}}&\Omega_{Z}\!\neq\!\Omega_{Y}\	
		\end{array}\right.
		\end{equation}
		%
		\begin{equation}
		\tau\!=\!\vartheta^{2}\operatorname{e}^{-\tfrac{\vartheta}{\Omega_{Z}}}-\varrho^{2}\operatorname{e}^{-\tfrac{\varrho}{\Omega_{Z}}}-\theta\vartheta \operatorname{e}^{-\tfrac{\vartheta}{\Omega_{Z}}}+\theta\varrho \operatorname{e}^{-\tfrac{\varrho}{\Omega_{Z}}},
		\end{equation}
		\begin{equation}
		\xi\!=\!\vartheta \operatorname{e}^{-\tfrac{\vartheta}{\Omega_{Z}}}-\varrho \operatorname{e}^{-\tfrac{\varrho}{\Omega_{Z}}}+\Omega_{Z} \operatorname{e}^{-\tfrac{\vartheta}{\Omega_{Z}}}-\Omega_{Z} \operatorname{e}^{-\tfrac{\varrho}{\Omega_{Z}}}, 
		\end{equation}	
		where, $\vartheta$ and $\varrho$ are specified in (\ref{eq:W(t)}), and $\lambda_{1}\!=\!\tfrac{\mu\vartheta+\mu \Omega_{Z}-\mu\theta}{\sqrt{2\pi}}\!-\!\tfrac{1}{2}$, $\lambda_{2}\!=\!\tfrac{\mu\theta-\mu\varrho-\mu \Omega_{Z}}{\sqrt{2\pi}}\!-\!\tfrac{1}{2}$, $\lambda_{3}\!=\!\tfrac{1}{2}\!-\!\tfrac{\mu\vartheta+\mu \Omega_{Y}-\mu\theta}{\sqrt{2\pi}}$, $\lambda_{4}\!=\!\tfrac{1}{2}\!+\!\tfrac{\mu\varrho+\mu \Omega_{Y}-\mu\theta}{\sqrt{2\pi}}$, $\varDelta = \operatorname{e}^{-\varphi}\!-\!\operatorname{e}^{-\alpha}$, $\varphi = \tfrac{\varrho}{\Omega_{Z}}$ and $\alpha=\tfrac{\vartheta}{\Omega_{Z}}$.
	\end{proposition}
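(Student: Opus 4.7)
The plan is to evaluate the expectation $\epsilon_{SRD}=\operatorname{E}_W[\operatorname{Q}(g(w))]=\int_0^\infty K(w)f_W(w)\,dw$ by substituting the linearized $\operatorname{Q}$-function $K(\cdot)$ from equation (\ref{eq:W(t)}) in Lemma~1 together with the MRC-SNR density $f_W(w)$ from equation (\ref{eq:MRC_PDF}). Because $K(w)$ is piecewise with breakpoints at $\varrho$ and $\vartheta$ (where $\varrho$ and $\vartheta$ are as defined in Lemma~1), the integral cleanly splits into three pieces: a constant-$1$ region on $[0,\varrho]$, a linear-in-$w$ region on $(\varrho,\vartheta)$, and a zero region on $[\vartheta,\infty)$. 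Thus
\begin{equation}
\epsilon_{SRD}=F_W(\varrho)+\int_\varrho^{\vartheta}\Bigl[\tfrac{1}{2}-\tfrac{\mu}{\sqrt{2\pi}}(w-\theta)\Bigr]f_W(w)\,dw,
\end{equation}
where $F_W(\cdot)$ is the CDF associated with $f_W(\cdot)$. The outer branching in (\ref{eq:Outage_MRC_overall}) between $\Omega_Z=\Omega_Y$ and $\Omega_Z\neq\Omega_Y$ comes entirely from which branch of (\ref{eq:MRC_PDF}) one substitutes.

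For the balanced case $\Omega_Z=\Omega_Y$, I would plug in $f_W(w)=w\Omega_Z^{-2}e^{-w/\Omega_Z}$. The CDF piece $F_W(\varrho)$ is obtained by a single integration by parts and yields the $1-(1+\varrho/\Omega_Z)e^{-\varrho/\Omega_Z}$-type terms visible in the first line of (\ref{eq:Outage_MRC_overall}). The constant-$\tfrac{1}{2}$ part of the linear region integrates via the same Gamma-like antiderivative and produces the remaining $e^{-\alpha}$ and $e^{-\varphi}$ exponentials (with $\alpha,\varphi$ as defined in the proposition). The genuinely linear part $-\tfrac{\mu}{\sqrt{2\pi}}(w-\theta)\,w\,e^{-w/\Omega_Z}/\Omega_Z^2$ requires evaluating $\int w e^{-w/\Omega_Z}dw$ and $\int w^2 e^{-w/\Omega_Z}dw$ between $\varrho$ and $\vartheta$; collecting the $w^2$, $w$, and constant contributions produces precisely the auxiliary quantities $\tau$ and $\xi$ stated in the proposition.

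For the unbalanced case $\Omega_Z\neq\Omega_Y$, the density is a difference of two pure exponentials divided by $\Omega_Z-\Omega_Y$, so every piece decomposes as the same integral evaluated once at parameter $\Omega_Z$ and once at $\Omega_Y$. The CDF contribution $F_W(\varrho)$ just gives $1-\tfrac{\Omega_Z e^{-\varrho/\Omega_Z}-\Omega_Y e^{-\varrho/\Omega_Y}}{\Omega_Z-\Omega_Y}$. The middle integral only involves $\int e^{-w/\Omega}dw$ and $\int w\,e^{-w/\Omega}dw$ between $\varrho$ and $\vartheta$, each of which is elementary. After grouping the four resulting exponentials $e^{-\varrho/\Omega_Z}$, $e^{-\vartheta/\Omega_Z}$, $e^{-\varrho/\Omega_Y}$, $e^{-\vartheta/\Omega_Y}$ and factoring out the prefactor $1/(\Omega_Z-\Omega_Y)$, the coefficients fold exactly into the linear combinations $\lambda_1,\lambda_2,\lambda_3,\lambda_4$ defined in the statement, recovering the second branch of (\ref{eq:Outage_MRC_overall}).

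The step I expect to be the most error-prone, and hence the main obstacle, is the bookkeeping in the balanced case: the $w e^{-w/\Omega_Z}$ factor from $f_W$ combined with the $(w-\theta)$ factor from $K$ creates quadratic polynomial-times-exponential antiderivatives, and one must carry the $w^2$, $w$, and constant terms evaluated at both endpoints $\varrho$ and $\vartheta$ without sign errors, then regroup them into the compact forms $\tau$ and $\xi$. The unbalanced case is mechanically lighter but demands care so that the $1/(\Omega_Z-\Omega_Y)$ factor is distributed correctly and the $\lambda_i$ coefficients are assembled with the right signs. Neither step uses anything beyond elementary calculus plus the definitions already fixed in Lemma~1, so once the bookkeeping is done the closed form (\ref{eq:Outage_MRC_overall}) follows directly.
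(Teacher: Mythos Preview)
Your proposal is correct and follows essentially the same route as the paper: the paper also substitutes the linearized $\operatorname{Q}$-function $K(\cdot)$ and the density $f_W$ from (\ref{eq:MRC_PDF}), writes $\epsilon_{SRD}=\int_{0}^{\varrho}f_W(w)\,dw+\int_{\varrho}^{\vartheta}\bigl[\tfrac{1}{2}-\tfrac{\mu}{\sqrt{2\pi}}(w-\theta)\bigr]f_W(w)\,dw$, and then evaluates the elementary polynomial-times-exponential integrals (invoking \cite[Eq.~2.321]{gradshteyn2014table}) to obtain (\ref{eq:Outage_MRC_overall}). Your more detailed description of how the $\tau,\xi$ and $\lambda_i$ groupings arise is consistent with the paper's terse ``after some algebraic manipulations.''
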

	\begin{proof}
		By plugging (\ref{eq:MRC_PDF}) into (\ref{outage_fading_linear}) and multiplying by the linearized Q-function $K(t)$, we attain
		\begin{equation}
		\epsilon\!=\!\int_{0}^{\varrho}f_X{(x)} dx+ \int_{\varrho}^{\vartheta} \left(\dfrac{1}{2}- \frac{\mu}{\sqrt{(2\pi)}}\left(x-\theta\right)\right)f_X{(x)}dx,
		\end{equation}
		which is solved with help of~\cite[Eq. 2.321]{gradshteyn2014table} 
		and after some algebraic manipulations we attain (\ref{eq:Outage_MRC_overall})~\cite{nouri2017performance}.	
	\end{proof}
\subsection{Asymptotic Analysis}
The outage probability in (\ref{outage_fading_linear}) can be defined as $\operatorname{P}[\Omega_i\!\leq\! \gamma_{\text{th}}]$ as the SNR goes to infinity, where $\gamma_{\text{th}}\!=\!2^{\cal R}\!-\!1$ . Thus, the approximated asymptotic outage probability per link in Rayleigh fading channels is $\epsilon_i\!=\!1 \!-\!\exp(-\gamma_{\text{th}}/\gamma_{i})$, where $\gamma_i$ is a function of $\beta$ and $P$~\cite[\S 10]{dosti2017ultra}. Thereafter, we resort to Taylor series as $\gamma_\text{th}/\gamma_i$ approaches zero as $\text{SNR}\rightarrow \infty$, and so, $\exp(x)\!\approx\! 1+x$ and attain an asymptotic expression as $\epsilon_i \approx (\gamma_{\text{th}}/\gamma_{i})$~\cite[\S 11]{dosti2017ultra}. The asymptotic expression of $\epsilon$ after maximum ratio combining of $S$ and $R$ transmissions $\epsilon_{\text{SRD}}$ is given in \cite[\S 7]{alves2012throughput}. Therefore, the outage probability is approximated as  $\!\epsilon_{\text{SRD}}^\infty\!\approx\!\frac{\gamma_{\text{th}}^2}{2 \gamma_{Z} \gamma_{Y}}$, resorting to series expansion as $\gamma_i \rightarrow \infty$. 

\section{Numerical Results and Discussion}
\subsection{URLLC via Cooperative Diversity}\label{URLLC}
In this section we show some numerical results of cooperative relaying transmission under FB regime. First, we show the impact of coding rate on the probability of successful transmission where MRC protocol outperforms DF, SC and DT in terms of reliability. We also indicate the minimum latency required to support URLLC. Thereafter, we show the optimal value of power allocation factor $\eta$ for each of studied protocols. In addition, we compare the performance of cooperative relaying to DT in terms of power consumption and blocklength to perform under the UR region (URR). We verify the accuracy of our analytical model through the Monte-Carlo simulations. Unless stated; otherwise, assume maximum transmit power per link as $20$ dB, $n =500$, $k=500$, and $R$ is in between $S$ and $D$, with $\beta = \frac{1}{2}$. The URR is shaded purple area in the following plots, and its most loose constraint is denoted with a red line where the outage probability is $10^{-3}$, thus $ 1- \epsilon =99.9 \%$ reliability is feasible.

\subsubsection{Reliability vs. Coding Rate}
Fig.$4$ compares the probability of successful transmission ($P_{succ} = 1-\epsilon$) as a function of coding rate in URR. We can clearly see that MRC supports URLLC with higher coding rates compared to DT, DF and SC schemes which is more evident with short packet lengths under the FB regime. Hence, MRC is less affected by the coding rate growth under the URR. For instance, with $n=400$ and $\cal R$$= 0.43$, MRC covers $99.999 \%$ reliability, while SC provides equal reliability as MRC but with lower coding rate as $k= 153$ and $n=400$. In addition, with $n =400$, reliability decreases to $99.9 \%$ and $99 \%$ with $k=133$ and $k=293$ for DF and DT schemes, respectively. Thus, URLLC is feasible via the cooperative schemes and we can apply each of these schemes based on our requirements such as reliability, packet length and number of transmitted information bits. 

\begin{figure*}[!t]
	\centering
	\includegraphics[width=\columnwidth,height=3.25in]{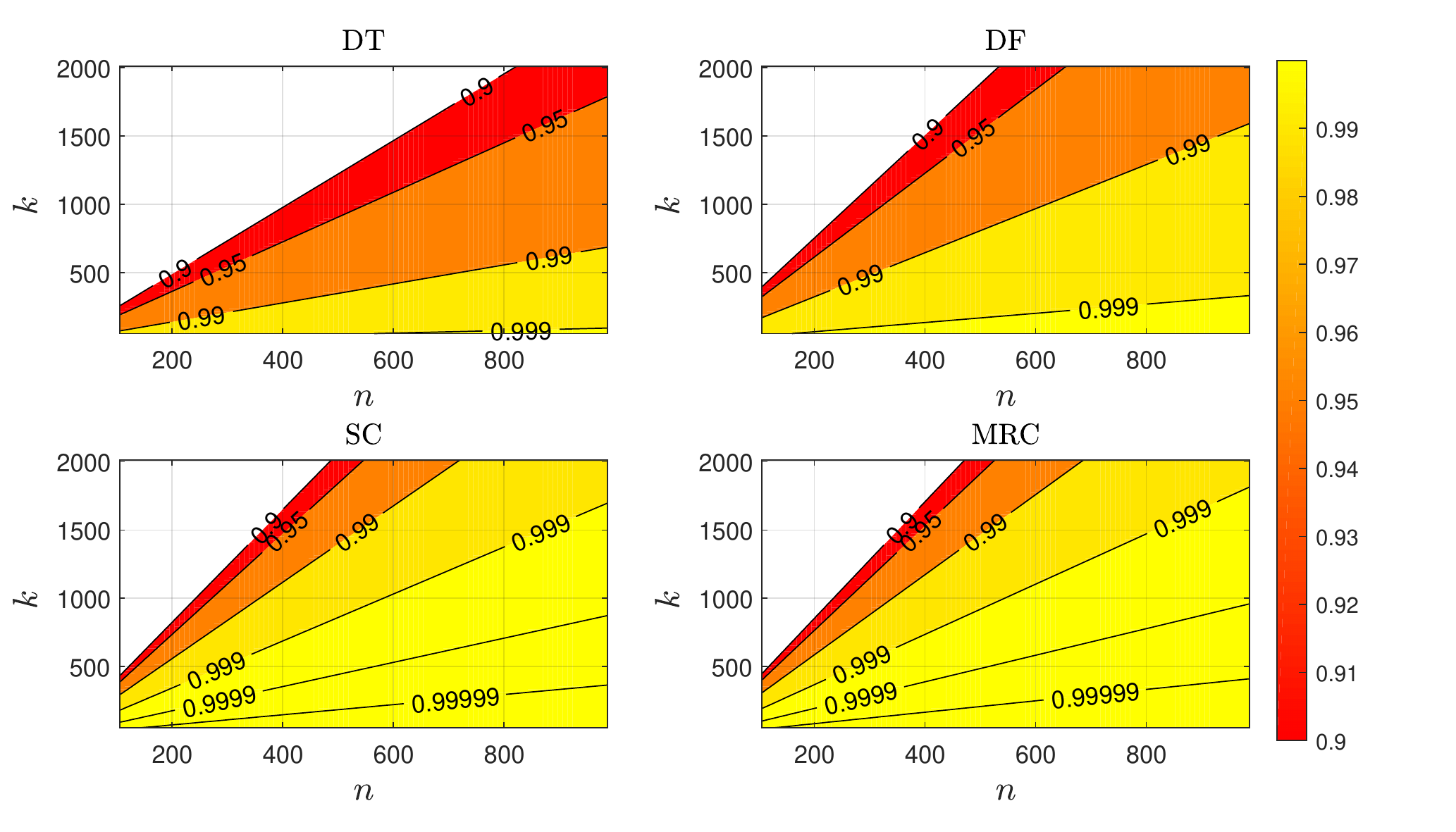}\label{fig:Psucc}
	\caption{Probability of successful transmission as a function information bits $k$ and blocklength $n$. This figure compares the impact of coding of on the performance of cooperative schemes, namely DF, SC and MRC to direct transmission. We show that MRC outperforms other studied cooperative scenarios in terms of reliability and coding rate.}
\end{figure*}
	
In Fig.$5$ we examine the impact of power allocation factor $\eta$ on the probability of successful transmission. As mentioned earlier in Section \ref{systemmodel}, in order to provide a fair comparison between DT and cooperative schemes, we allocate powers to $S$ and $R$ according to the power allocation factor. In DF, outage probability is minimized via equal power allocation strategy while in SC and MRC, we exploit additional diversity of the direct link; thus, less power should be allocated to $R$ as shown in Fig.$5$. We also illustrate that URLLC is feasible through the cooperative schemes, particularly with SC and MRC where the outage probability is minimized to $0.2 \%$ and $0.1 \%$, respectively. As we indicated in our previous work in~\cite{nouri2017performance}, these results holds for other values of SNR and coding rate.
\begin{figure}[!t]
	\centering
	\includegraphics[width=1\columnwidth,]{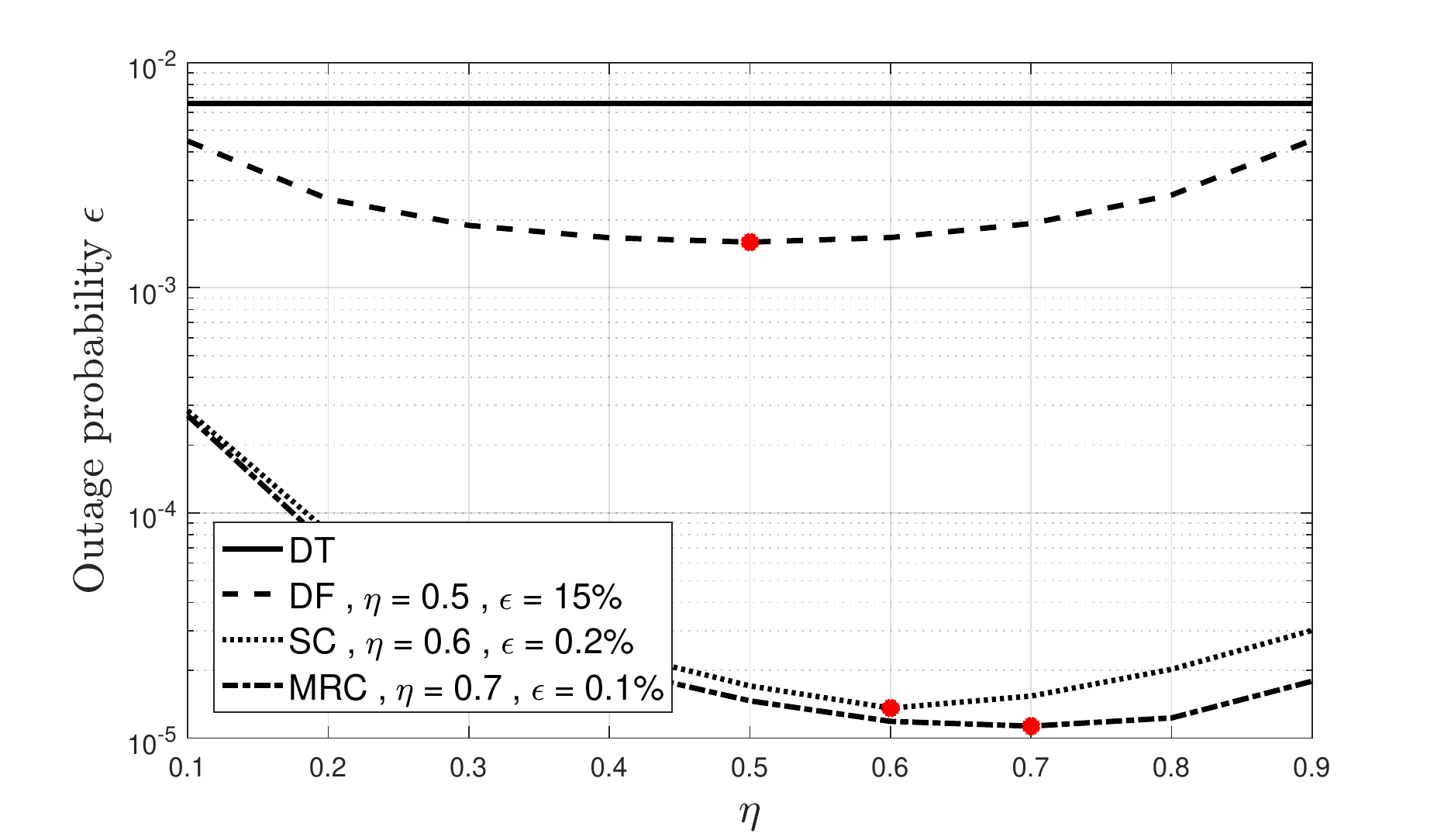}\label{fig:eta}
	\caption{Impact of power allocation on the outage probability with $k=250$, $n =500$ and maximum SNR=$20$ dB. This figure illustrates the optimal value of power allocation factor for each of studied relaying scenarios, where the outage probability is minimized. We show that we have to allocate more power to the source to work in ultra-reliable region, and MRC requires more power at the source compared to the other studied protocols. }
\end{figure}

\begin{figure}[!t]
	\centering
	\includegraphics[width=1\columnwidth,]{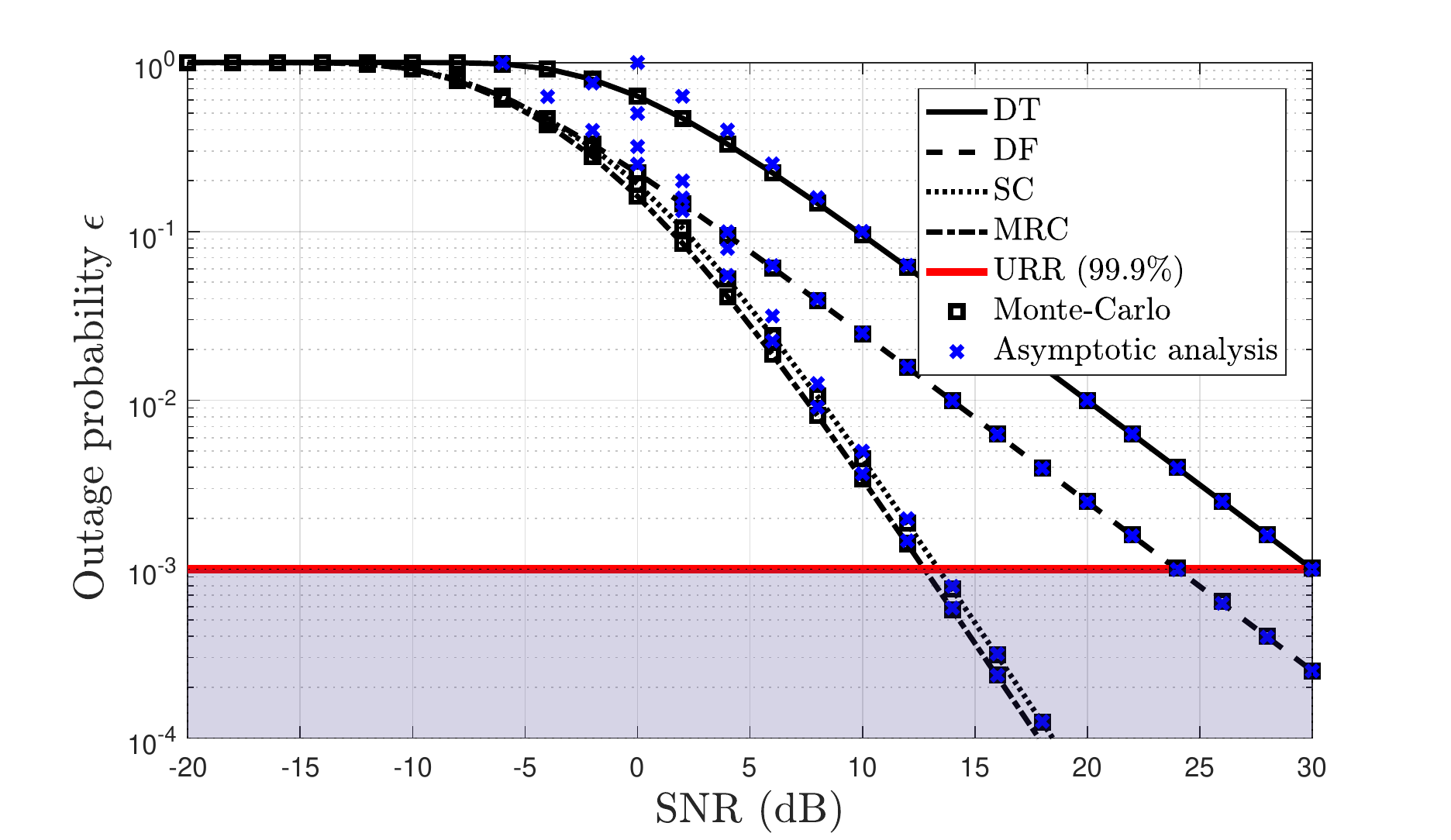}\label{fig:powerconsumption}
	\caption{Performance analysis of cooperative protocols with $k=500$ and $n=500$ under EPA. This figure compares the power consumption of studied cooperative schemes to direct transmission under the equal power allocation strategy. We show that under EPA, MRC requires less transmit power to work under ultra-reliable region compared to the other studied protocols. Monte-Carlo simulations confirm the accuracy and appropriateness of our analytical model. In addition, asymptotic expressions approach the analytical results as the SNR increases.}
\end{figure}

\begin{figure}[!t]
	\centering
	\includegraphics[width=1\columnwidth,]{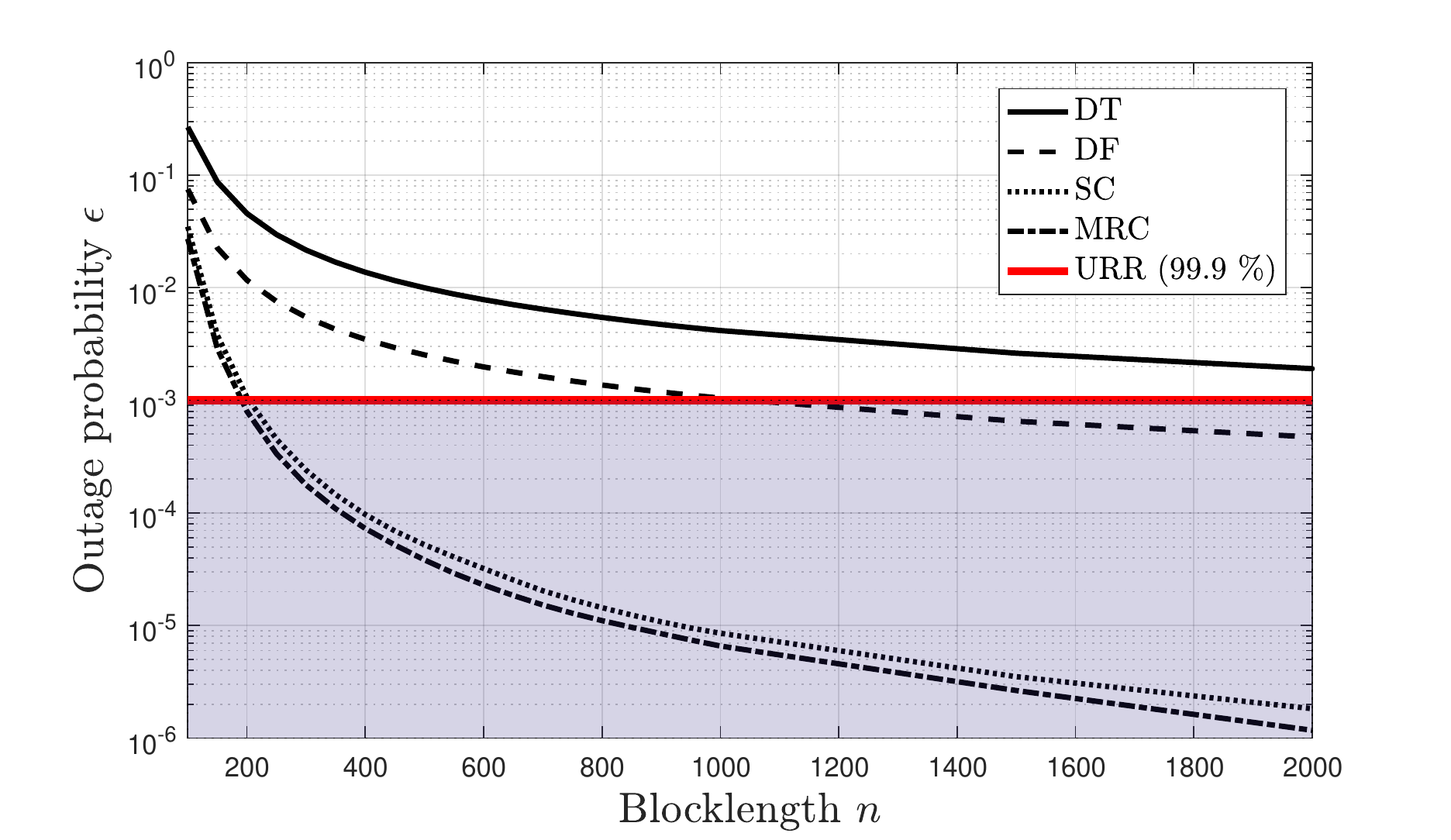}\label{fig:blocklength}
	\caption{Outage probability analysis for DT, DF, SC and MRC under EPA. This figure compares the performance of cooperative schemes to direct transmission in terms of outage probability under the finite blocklength regime. We indicate that URLLC is feasible via cooperative diversity technique. We show the superiority of MRC and SC over DF and DT schemes under the finite blocklength regime.}
\end{figure}

In Fig.$6$ we compare the ultra-reliable performance of cooperative schemes to DT in terms of transmit power under equal power allocation constraint. We can clearly see the power gain attained via the cooperative protocols at high SNR regime where there is huge performance gap between cooperative schemes and DT. In addition, we indicate that MRC and SC protocols perform closely in the entire SNR range and consumes less transmit power to communicate under the URR in comparison to DF and DT. 

In addition, we indicate the possibility of using asymptotic expressions in ultra-reliable region. In other words, at high SNR regime, the maximum achievable coding rate (\ref{eq:maximum rate}) converges the asymptotically long codewords as $\cal R$$_\text{asym}(n,\epsilon)= C_\epsilon$, where $C_\epsilon=\text{sup}\{{\cal R}:\text{Pr}[\log_2(1+\rho)< {\cal R}]<\epsilon\}$. In Fig.$6$ we show that the asymptotic expressions approach the analytical results as the transmit power increases.

Fig.$7$ indicates the performance advantage of cooperative schemes over DT. Cooperative schemes exploit diversity gain which decreases the outage probability remarkably. As we expected, the outage probability decreases in blocklength. In addition, SC and MRC protocols are able to support URLLC under FB regime with very short packet lengths. 

Fig.$8$ indicates the total minimum latency required for URLLC under the FB regime with two distinct power constraints as $i$) EPA: $ P =\eta P+ (1-\eta)P$, where $\eta = 0.5$, and $ii$) OPA: $P_S +P_R \leq P$. The choices of the minimum latency $\delta$ and optimal powers are in such a way that minimizes the outage probability constraint to a specific interval of interest and holds the power constraints which gives the optimal values of $n$ and $P$, and is a nonlinear optimization problem as follows \footnote{We solve the optimization problem numerically with the Matlab function $fmincon$. Interior point algorithm is used to solve the nonlinear optimization problem~\cite{waltz2006interior}.}.
\begin{equation*}
\begin{aligned}
\centering
& \underset{}{\text{minimize}}
& &  \epsilon (n,P) \\
& \text{subject to}
& & P_S + P_R \leq P,\\
& & & 100\leq n\leq10000.
\end{aligned}
\end{equation*}
We set the minimum  blocklength to 100 since (\ref{outage_fading}) is accurate for $n > 100$, as proved for AWGN channels \cite[Figs. $12$ and $13$]{polyanskiy2010channel} as well as for fading channels as discussed in~\cite{yang2014quasi}, and to a maximum of $10^4$ so to reduce the search range, and to be within URLLC boundaries. 

It shows that DT is not able to cope with the stringent latency constraint and need a large tolerance of delay; thus, we resort to cooperative protocols in order to reduce the latency in URR. It can be clearly seen that SC works highly better than DF and performs closely to MRC in the entire range but with higher latency requirements when we allocate equal powers to the $S$ and $R$. We also indicate that under OPA constraint, SC outperforms MRC in terms of channel uses and is more energy efficient than MRC as we discuss about it in the following section, while with equal power allocation strategy, MRC requires less channel uses and consumes less transmit power as we can see in Fig.$6$. Here, with equal power allocation strategy, the total power of $S$ and $R$ may be less than the maximum total power ($20$ dB)
but in Fig.$6$ we force $P_S$ and $P_R$ to be equal with total power of $P_{max}$. Therefore, according to the simulations, when $\cal R$$=1$ bpcu and $P=20$ dB, higher reliability is feasible in Fig.$6$ compared to Fig.$8$.

\begin{figure}[!t]
	\centering
	\includegraphics[width=1\columnwidth,]{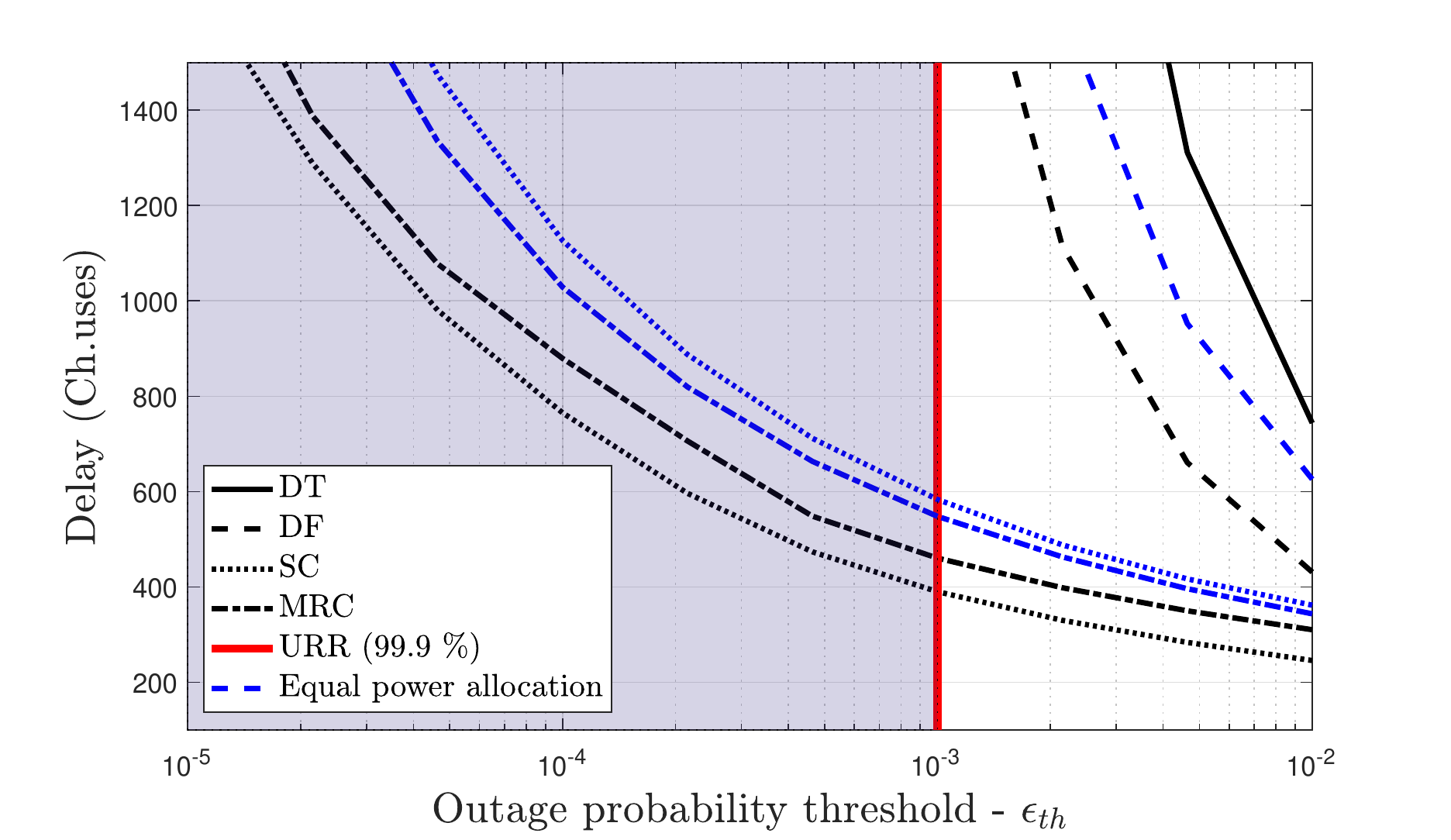}\label{fig:delay}
	\caption{Total latency in terms of channel uses with EPA and OPA constraints.
	This figure compares the total minimum latency in cooperative schemes under two distinct power constraints, so-called EPA and OPA. We show that under EPA, MRC has the lowest latency while with OPA, SC outperforms DF and MRC protocols in terms of latency requirement.}
\end{figure}

\subsection{Energy Efficiency Analysis}\label{EE}
Energy efficiency (EE) determines the trade-off between throughput gains and total energy consumed. Let us first define the total energy consumption per bit of each scenario. The total power consumed includes power of transmission with no dependency on the distance of relay nodes, consumed power in radio frequency(RF) circuitry and also coding rate~\cite{de2011energy},~\cite{alves2014outage}. Here, we ignore the baseband processing consumption since its value is negligible in comparison to the energy consumption of RF circuitry~\cite{cui2005energy}. 

Then the total energy consumption per bit of a single-hop transmission is
\begin{equation}
\operatorname{ E_{\operatorname{SH}}} = \dfrac{P_{PA}+P_{TX}+P_{RX}}{\cal R},
\end{equation}
where $P_{PA} = P/ \phi $ is the power amplifier consumption for a single-hop transmission and $\phi$ is the drain efficiency of the amplifier, $P_{TX}$ and $P_{RX}$ are the power consumed for transmitting and receiving in the internal circuitry, respectively.
In a similar way, we can also find the total power consumption of multi-hop schemes by determining the outage probability on $S$-$R$ link in each cooperative schemes.
\subsubsection{Cooperative Transmissions}
The total power consumption for DF protocol depends on the outage probability of $S$-$R$ link as follow
\begin{equation} \label{EE_DF}
\operatorname{ E_{\operatorname{DF}}} = \epsilon_{SR_{DF}} \times \dfrac{P_{PA}+P_{TX}+P_{RX}}{\cal R}
+(1-\epsilon_{SR_{DF}})\times \dfrac{2P_{PA}+2P_{TX}+2P_{RX}}{\cal R},
\end{equation} 
where the first term indicates that the consumed energy on the $S$-$R$ link, while the second term shows that $R$ could decode the message correctly and send the packet to $D$. 

In the case of SC and MRC, the total power consumption is formulated as follow
\begin{equation} \label{EE_SC}
\operatorname{ E_{\operatorname{j}}} = \epsilon_{SR_{j}} \times \dfrac{P_{PA}+P_{TX}+2P_{RX}}{\cal R}
+(1-\epsilon_{SR_{j}})\times \dfrac{2P_{PA}+2P_{TX}+3P_{RX}}{\cal R},
\end{equation}
where  $j\in \{SC, MRC\}$ and $\epsilon_{SR}$ is calculated by (\ref{outage_fading_linear}) accordingly to each method. The additional $P_{RX}$ in each term of (\ref{EE_SC}) compared to the (\ref{EE_DF}), corresponds to the transmission of $S$, which is heard by both $R$ and $D$ and destination decodes $S$-$D$ and $R$-$D$ transmissions, simultaneously.

Hence, the EE for each protocol is formulated as
\begin{equation}
EE = \dfrac{\textrm{Throughput}}{\textrm{Total energy consumed}}= \dfrac{{\cal R} (1-\epsilon)}{\operatorname{E}} , \;\;\;\;\; \big(\textrm{bits/Ch.uses/W}\big)
\end{equation}

	\begin{figure}[!t]
	\centering
	\includegraphics[width=1\columnwidth]{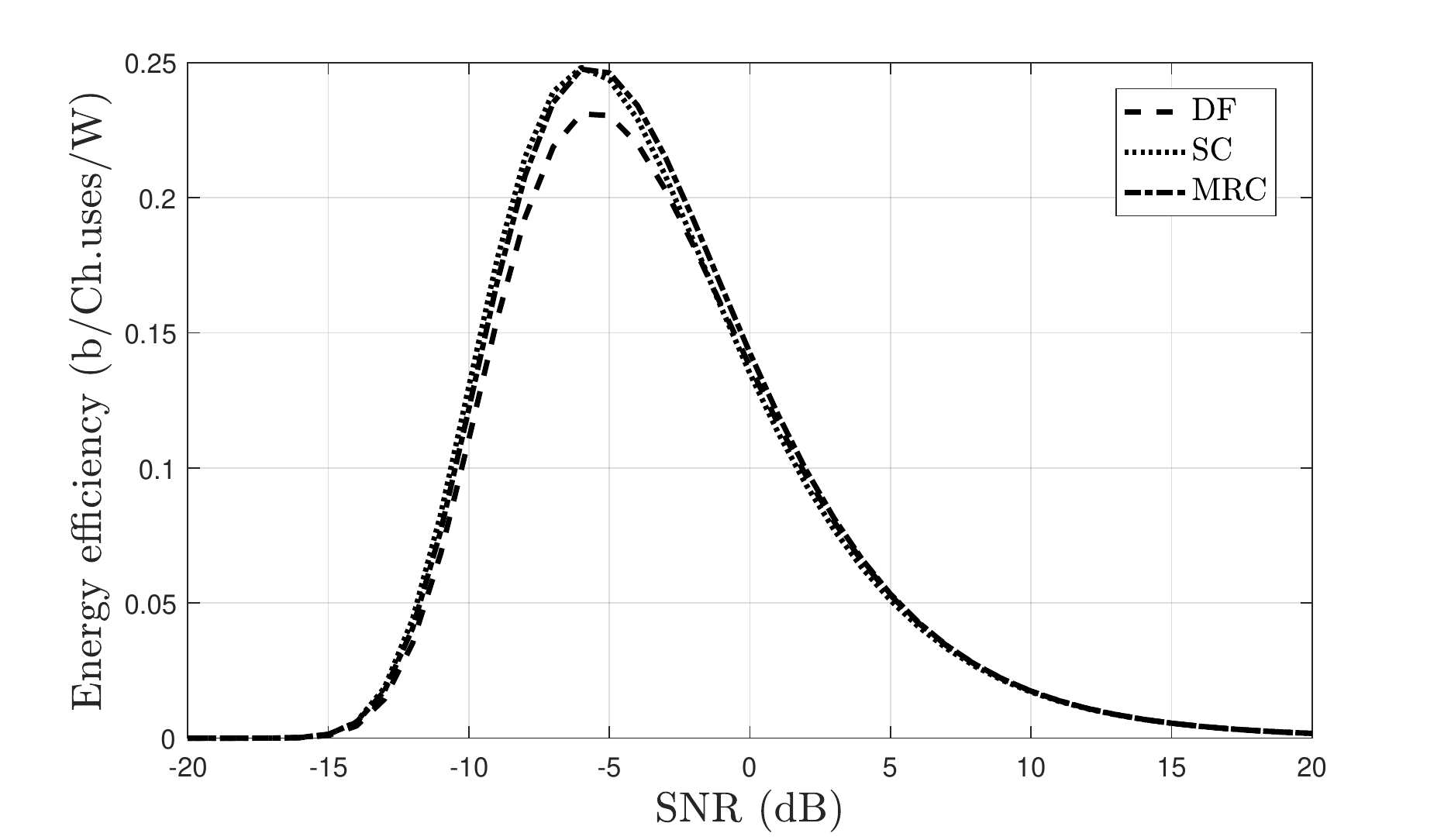}\label{F1_tput}
	\caption{Energy efficiency analysis of cooperative schemes under equal power allocation. This figure compares the energy efficiency in DF, SC and MRC schemes under EPA as a function maximum transmit power.}
\end{figure}

Furthermore, as observed from Fig.$9$, $EE(\epsilon,\operatorname{E})$ is non-convex in the SNR, while the outage probability is monotonically decreasing in the SNR and energy consumption is monotonically increasing, which is observed in Figs. $6$ and $11$, respectively. We maximize the energy efficiency as follow
	\begin{equation*}
	\begin{aligned}
	\centering
	& \underset{}{\text{Maximize}}
	& &  EE (\epsilon,E) \\
	& \text{subject to}
	& & P_S + P_R \leq P,\\
	& & & \epsilon(n,P) \leq \epsilon_{\text{th}},\\
	& & & 100\leq n\leq10000.
	\end{aligned}
	\end{equation*} 
	
	 This problem is equivalent to minimize the outage probability with respect to $P^*_S$, $P^*_R$ and blocklength $n^*$. Since we aim to compare the performance of cooperative schemes, we do not focus on the proposal of a particular solution, but we resort to numerically efficient algorithm. Therefore, we resort to $f_{mincon}$ implemented in Matlab and use interior point algorithm to solve the nonlinear optimization problem as detailed in~\cite{waltz2006interior}. We consider outage probability threshold in an interval of interest as $10^{-5}<\epsilon_{\text{th}}<10^{-2}$. At each outage probability value, we numerically determine $P^*_S$, $P^*_R$ and blocklength $n^*$ that maximize the energy efficiency. We apply the numerical optimization due to the nonlinear constraint on the outage probability $\epsilon(n,P) \leq \epsilon_{\text{th}}$.

\begin{figure}[!t]
	\centering
	\includegraphics[width=1\columnwidth,]{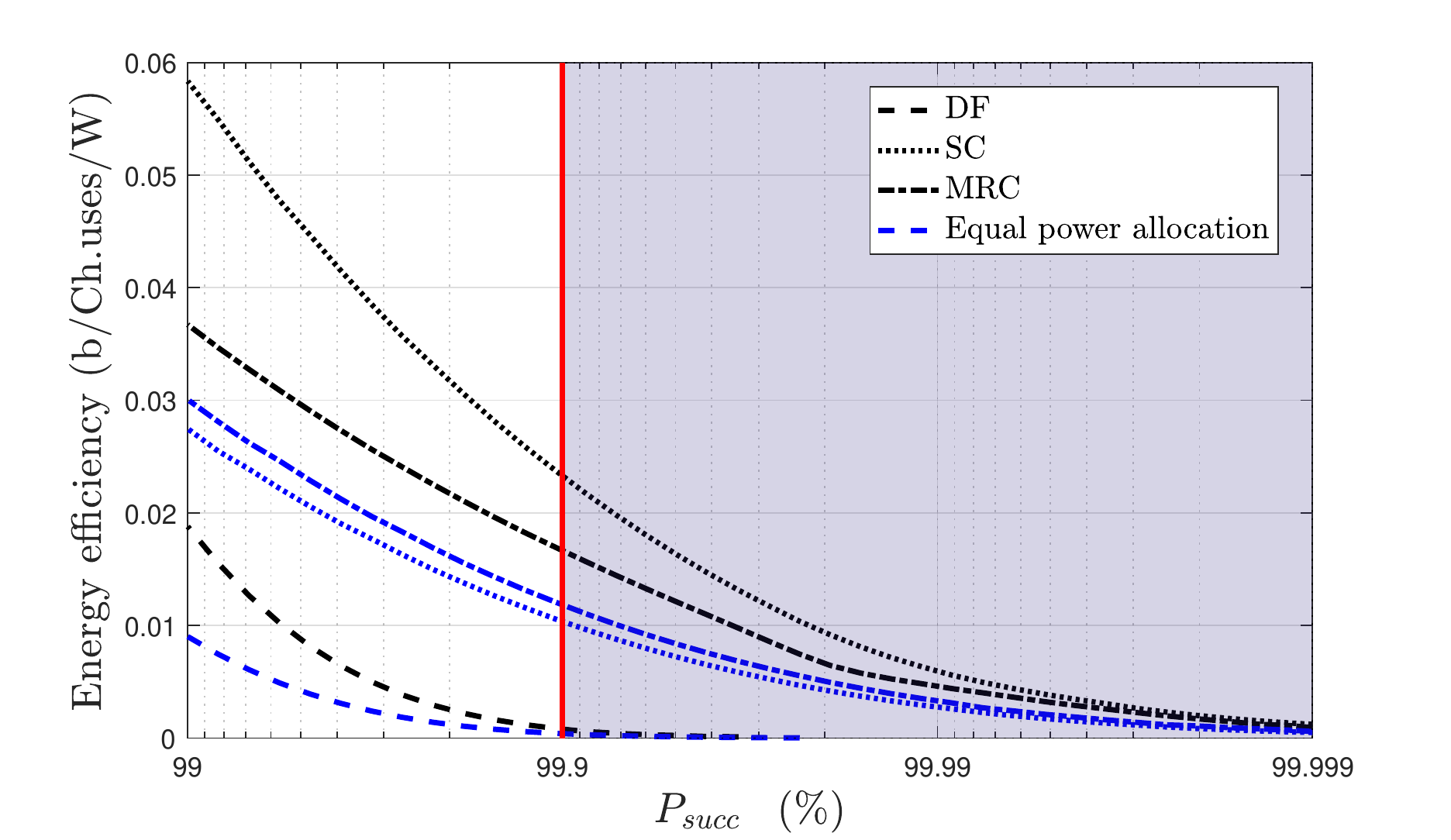}\label{fig:EE}
\caption{Energy efficiency of DF, SC and MRC scenarios under EPA and OPA strategies.
This figure compares the energy efficiency in DF, SC and MRC cooperative schemes under EPA and OPA strategies as a function reliability. We show that under EPA, MRC is the most energy-efficient strategy while with OPA strategy, SC becomes the most energy-efficient protocol.}
\end{figure}

\begin{figure}[!t]
	\centering
	\includegraphics[width=1\columnwidth,]{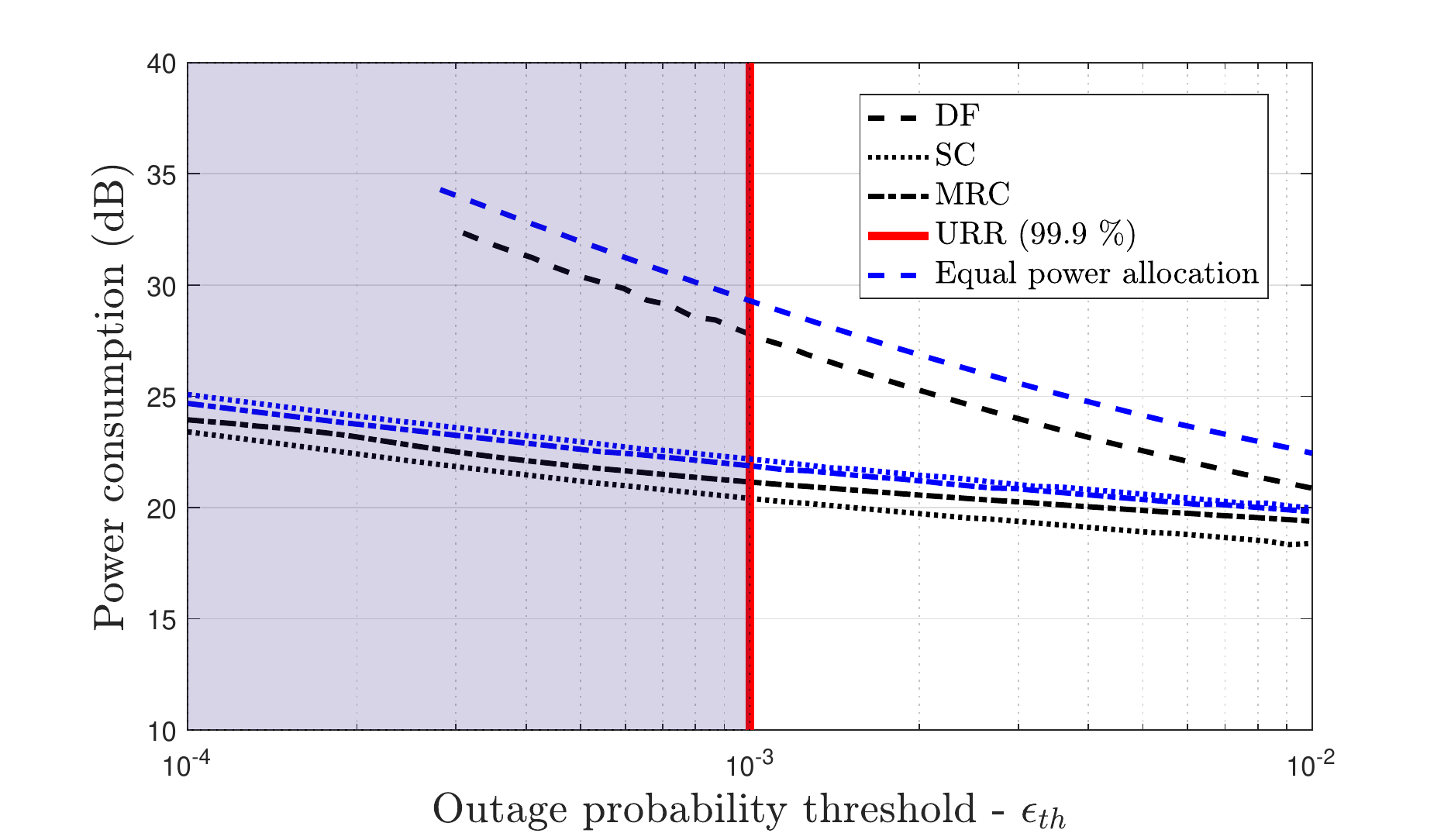}\label{fig:totalpower}
	\caption{Total power consumption of DF, SC and MRC under EPA and OPA constraints.
	This figure indicates the total power consumed in each of studied cooperative schemes. We show that under EPA, MRC consumes less energy compared to DF and SC scenarios while under OPA strategy, SC becomes the most energy-efficient scenario.}
\end{figure}

Fig.$10$ compares the energy efficiency of cooperative schemes in terms of probability of successful transmission under two distinct power constraints. In this paper, we assume $P_{TX}=97.9$ mW, $P_{RX}=112.2$ mW and the drain efficiency $\phi = 0.35$ according to the power consumption values reported in~\cite{cui2005energy}. Under EPA strategy, MRC is the most energy efficient scenario among other cooperative scenarios as it consumes less transmit power shown in Fig.$6$, and has lower latency in URR while under OPA, SC becomes the most energy efficient protocol as we show in Fig.$8$ it reduces the latency and the total power consumption is less than that of MRC. Since Fig.$5$ indicates that in order to perform in URR, we should allocate more power to the source where $\eta$ is equal to $0.6$ and $0.7$ for SC and MRC, respectively. Hence, more power is allocated to the source of MRC than that of SC; thus, MRC becomes less energy efficient compared to SC under OPA strategy.

Fig.$11$ compares the total consumed energy in each of studied cooperative scenarios under EPA and OPA strategies. Under EPA, as we expected MRC is superior and consumes less transmit power compared to DF and SC protocols, while with OPA, SC outperforms MRC and becomes most energy efficient protocol. In addition, with the maximum transmit power of $20$ dB, and no feasible solutions are found for DF protocol under stringent reliability requirements, which evinces the need for more sophisticated cooperative protocols. Feasible solutions are found if the transmit power increases, but it would be spectrally and energy insufficient. 

\section{Conclusions}\label{con}
In this paper, we assess the relay communication under the finite blocklength regime under Rayleigh fading. Performance of three relaying scenario, namely DF, SC and MRC is compared to direct transmission under two distinct power constraints so-called EPA and OPA. Based on the outage probability analysis of each transmission protocol, we show that relaying improves the probability of a successful transmission and guarantees ultra-high reliability with FB codes. MRC protocol is less affected by the coding and provide higher reliability compared to DT and two other relaying scenarios. In addition, we numerically show the optimal power allocation for the relaying protocols under study so to operate in URR. Our results shows that operation at URR is feasible by allocating more power to the source; however, relay node is considered to provide additional diversity gain compared to the DT which is more evident at high SNR regime. We compare the studied cooperative schemes in terms of latency and energy efficiency under the two distinct power constraints. According to the results, with equal power allocation at source and relay, MRC is the most energy efficient protocol with lower latency and power consumption compared to the other scenarios while SC has the highest energy efficiency and lowest latency under optimal power allocation strategy. Moreover, we provide the outage probability in closed form and prove the accuracy and appropriateness of our analytical model through numerical results. Finally, in our future work, we will focus on the impact of imperfect channel state information on URLLC.

\setcounter{secnumdepth}{0}
\section{Declarations}

\begin{backmatter}
\section*{Availability of data and material}
The  manuscript  is  self-contained.  Simulations  description  and  parameters  are  provided  in  details  in  Section  4.

\section*{Competing interests}
  The authors declare that they have no competing interests.
\section*{Funding}
This work has been partially supported by Finnish Funding Agency for Technology and Innovation (Tekes), Huawei Technologies, Nokia and Anite Telecoms, and Academy of Finland (under Grant no. 307492)

\section*{Author's contributions}
All  authors  have  contributed  to  this  manuscript  and  approved  the  submitted  manuscript.

\section*{Acknowledgments}
No applicable.
\section*{Author details}
Centre  for  Wireless  Communications  (CWC),  University  of  Oulu,  Finland
%
%

\end{backmatter}
\end{document}